%\documentclass[journal,onecolumn,12pt,draftcls]{IEEEtran}

%% TSP 模板使用下面代码
\documentclass[lettersize,journal]{IEEEtran}

\usepackage{color}
\usepackage{amsmath,amsthm,amssymb,amsfonts}

% 允许公式跨栏，数字范围【1-4】代表跨栏程度，越大表示程度越高，公式用align
\allowdisplaybreaks[4]

\usepackage[T1]{fontenc}
\usepackage{graphicx}
\usepackage{longtable}
\usepackage{float}
%\usepackage{mathabx}
% 设置超链接
\usepackage{hyperref}

% 控制页面间距，可以注释
%\usepackage[top=2cm,bottom=2cm,left=2.5cm,right=2.5cm,includehead,includefoot]{geometry}

% 自定义页眉页脚
%\usepackage{fancyhdr}

% 等式上下加字母，此包与footnote冲突
%\usepackage{extpfeil}

% 数学字体加粗
\usepackage{bm}

% 使用不同的符号字体宏包，可省略
%\usepackage{latexsym}

% 是否加入设置间距宏，无需求可以不加入
%\usepackage{setspace}

% 设置脚注的位置，一般不需要改动
\usepackage[bottom]{footmisc}

% 控制花体字母的区别，IEEE文档里一般不要改动
\usepackage[mathscr]{eucal}

% 定制标题，center指标题居中
%\usepackage[center]{caption2}

% 对表格列格式的扩展
%\usepackage{array}

% 注意tocloft与此包冲突
%\usepackage{subfigure}

\usepackage[caption=false,font=normalsize,labelfont=sf,textfont=sf]{subfig}

% 表格内换行
\usepackage{multirow}
\usepackage{makecell}

\usepackage{cite}

% 加入EPS格式的图片
\usepackage{epsfig}

% 浮动图片
%\usepackage{wrapfig}

% 对文字高亮，加划线等操作
\usepackage{soul}

% 加algorithm
% 方法1
%\usepackage{algorithm}
%\usepackage{algorithmicx}
%\usepackage{algpseudocode}

% Algorithm中显示Input以及Output
%\renewcommand{\algorithmicrequire}{\textbf{Input:}} 
%
%\renewcommand{\algorithmicensure}{\textbf{Output:}} 

% 方法2

% 步骤带序号
%\usepackage[linesnumbered,boxed,ruled,commentsnumbered]{algorithm2e}

% 步骤不带序号
\usepackage[lined,boxed,ruled]{algorithm2e}

%% Reduce the number of pages dedicated

%\makeatletter
%\g@addto@macro\normalsize{%
%	\setlength\abovedisplayskip{2pt}
%	\setlength\belowdisplayskip{2pt}
%	\setlength\abovedisplayshortskip{0.5pt}
%	\setlength\belowdisplayshortskip{0.5pt}
%	\setlength\textfloatsep{5pt}
%	\setlength{\abovecaptionskip}{-4pt}
%	\setlength{\belowcaptionskip}{-5pt}
%}
%\makeatother
%
%\newcommand{\subparagraph}{}
%\usepackage{titlesec}
%\titlespacing{\section}{0pt}{1.2ex plus .3ex minus .1ex}{.2ex plus .1ex}
%\titlespacing{\subsection}{0pt}{1.2ex plus .3ex minus .1ex}{.2ex plus .1ex}

%% end

% small 环境中公式标号不缩小，注意此命令与下一个缩页数命令冲突，会导致公式靠左
%\makeatletter
%\renewcommand{\maketag@@@}[1]{\hbox{\m@th\normalsize\normalfont#1}}
%\makeatother

% 用上述代码缩小段间距以及公式间距后，再使用上面的标号不缩小，会发生使用eqref引用公式后，公式号前面会出现空格，使用重定义的引用可以解决此问题
%\newcommand{\Eqref}[1]{(\ref{#1})}

% 简写命令集合

\newtheorem{theorem}{Theorem}
\newtheorem{lemma}{Lemma}

\newtheorem{corol}{Corollary}

%\newtheorem{corollary}[theorem]{Corollary}

%信息几何专用

%order

% Condition

% Expected value
\newcommand{\Exp}{{\mathbb{E}}}

%\newcommand{\expcnd}[2]{{\Exp}\left\{#1\left\vert#2\right.\right\}}%conditional expectation

% Variance value

% Braces
\newcommand{\braces}[1]{\left\lbrace #1\right\rbrace}

% KL divergence
\newcommand{\Dkl}[2]{D_{\textrm{KL}}\left\lbrace #1 : #2\right\rbrace}

% Positive integer set
\newcommand{\setposi}[1]{\mathcal{Z}_{#1}^+}

% Non-negative integer set 
\newcommand{\setnnega}[1]{\mathcal{Z}_{#1}}

% Trace operator

% Diag operator
\newcommand{\diag}[1]{\mathrm{diag}\left\lbrace #1\right\rbrace }
\newcommand{\Diag}[1]{\mathrm{Diag}\left\lbrace #1\right\rbrace }
\newcommand{\Bdiag}[1]{\mathrm{Bdiag}\left\lbrace #1\right\rbrace }

% logdet operator

% absolute value operator

% cardinality operator

% log2 operator

% l2 norm operator

% Frobenius norm operator

% Inner product operator

%fractional

%ldots
 %1,2,...,K
 %2,3,...,K
%partial and integration

% Inverse
%inverse
%inverse
%pseudo inverse
% Inverse

% Gaussian distribution

% arg operation
\newcommand{\argmax}[1]{\mathop{\arg\max}\limits_{#1}}
\newcommand{\argmin}[1]{\mathop{\arg\min}\limits_{#1}}

%\renewcommand{\st}{\mathrm{subject}\quad\mathrm{to}}
% tricky ith thingy

% to infinity

% delta function

% basic vector
%i-th column of the identity matrix
%element of the vector/matrix

% vectorization of a matrix

% overarc

%set difference

%element to set

%cosine theta

%db

%equation with tag

\newcommand{\equaa}{\mathop{=}^{(\textrm{a})}}
\newcommand{\equab}{\mathop{=}^{(\textrm{b})}}

%mathcal font

%mathbf font
\newcommand{\ba}{\mathbf{a}}

\newcommand{\bd}{\mathbf{d}}
\newcommand{\be}{\mathbf{e}}

\newcommand{\bs}{\mathbf{s}}
\newcommand{\bt}{\mathbf{t}}

\newcommand{\bx}{\mathbf{x}}
\newcommand{\by}{\mathbf{y}}
\newcommand{\bz}{\mathbf{z}}

\newcommand{\bA}{\mathbf{A}}

\newcommand{\bC}{\mathbf{C}}

\newcommand{\bG}{\mathbf{G}}

\newcommand{\bI}{\mathbf{I}}

\newcommand{\bR}{\mathbf{R}}

\newcommand{\bX}{\mathbf{X}}

%dsfont

\newcommand{\bbC}{\mathbb{C}}
\newcommand{\bbR}{\mathbb{R}}

%bmfont

%constant

%\mathrm font

%inverse

%hat

%tilde

%bar

%symbol

\newcommand{\bSigma}{{\boldsymbol\Sigma}}

\newcommand{\btheta}{{\boldsymbol\theta}}

\newcommand{\bmu}{{\boldsymbol\mu}}

\newcommand{\bxi}{{\boldsymbol\xi}}

\newcommand{\bbeta}{{\boldsymbol\eta}}

% et. al.

%MMSE  SNR SINR

%special

%\newcommand{\invrpkup}{\inv{\left(\upup{{\bR}_{\Pindex{k}}}\right)}}
%\newcommand{\invrpmup}{\inv{\left(\upup{{\bR}_{\Pindex{m}}}\right)}}
%\newcommand{\invrpiup}{\inv{\left(\upup{{\bR}_{\Pindex{i}}}\right)}}
%\newcommand{\invrpjup}{\inv{\left(\upup{{\bR}_{\Pindex{j}}}\right)}}

\newcommand{\normmm}[1]{{\left\vert\kern-0.25ex\left\vert\kern-0.25ex\left\vert #1 
		\right\vert\kern-0.25ex\right\vert\kern-0.25ex\right\vert}}

% correct bad hyphenation here
% \hyphenation{op-tical net-works semi-conduc-tor}

% \renewcommand{\baselinestretch}{1.9} %{2.0}
% \makeatother

%\linespread{1.65}

%护眼专用
\definecolor{myback}{RGB}{204,232,207}
%\pagecolor{myback}

% RIG专用字符

\usepackage{makecell}

\hyphenation{op-tical net-works semi-conduc-tor IEEE-Xplore}

%% 设置全文行间距，1.0代表一倍行间距， 默认是1.0
%\linespread{1.0}

\begin{document}

% 标题、作者、致谢等内容输入
\title{Signal Detection for Ultra-Massive MIMO: An Information Geometry Approach}	
\author{
	Jiyuan~Yang,~\IEEEmembership{Student~Member,~IEEE,}
	~Yan~Chen,
    ~Xiqi~Gao,~\IEEEmembership{Fellow,~IEEE,}
    ~Dirk~Slock,~\IEEEmembership{Fellow,~IEEE,}
    and~Xiang-Gen~Xia,~\IEEEmembership{Fellow,~IEEE}
}

\maketitle
\begin{abstract}
In this paper, we propose an information geometry approach (IGA) for signal detection (SD) in ultra-massive  multiple-input multiple-output (MIMO) systems.
We formulate the signal detection as obtaining the marginals of the {\textsl{a posteriori}} probability distribution of the transmitted symbol vector.
Then, a maximization of the {\textsl{a posteriori}} marginals (MPM) for signal detection can be performed.
With the information geometry theory, we calculate the approximations of the \textsl{a posteriori} marginals.
It is formulated as an iterative $\mathbf{\textsl{m}}$-projection process between submanifolds with different constraints.
We then apply the central-limit-theorem (CLT) to simplify the calculation of the $\mathbf{\textsl{m}}$-projection since the direct calculation of the $\mathbf{\textsl{m}}$-projection is of exponential-complexity.
With the CLT, we obtain an approximate solution of the $\mathbf{\textsl{m}}$-projection, 
which is asymptotically accurate.
Simulation results demonstrate that the proposed IGA-SD emerges as a promising and efficient method to implement the signal detector in ultra-massive MIMO systems.
\end{abstract}
\begin{IEEEkeywords}
	Ultra-massive MIMO, signal detection, Bayesian inference, information geometry.
\end{IEEEkeywords}

\section{Introduction}
As one of the critical technologies for $5$G, massive multiple-input multiple-output (MIMO) can provide significant gains in both spectral efficiency and energy efficiency for communication systems \cite{8626085,10121037}.
In future $6$G communications, an ultra-massive MIMO system will employ an ultra-large array with hundreds or thousands of antennas,  serving tens or even hundreds of users simultaneously,
which is able to achieve higher spectral efficiency and energy efficiency, and wider and more flexible network coverage than ever \cite{10054381,bjornson2019massive,9170651,9693928}.
For the realization of the substantial benefits of ultra-massive MIMO, signal detection is of great importance.
Based on a received signal, the task of the detector is to determine the transmitted symbol.
The optimal detector based on the maximum  {\textsl{a posteriori}} (MAP) criterion or the maximum-likelihood (ML) criterion performs an exhaustive search and examines all possible symbols, which is shown as non-deterministic polynomial-time hard (NP-hard).
Consequently, the computational complexity of the MAP or ML detector  rapidly becomes unaffordable as the number of decision symbols increases.
On the other hand, the linear detectors, e.g., the linear minimum-mean-squared error (LMMSE) detector, are widely adopted due to the polynomial-time complexity.
Nonetheless, the  estimates of the transmitted symbols of the LMMSE detector are biased \cite{7244171}, and the performance of the LMMSE detector degrades severely in massive MIMO systems with high-order constellations \cite{6841617}.

In the past few decades, many works have been devoted to the massive MIMO signal detection \cite{6841617,9250659,6415398,9484686,9139393,5961820},
of which Bayesian inference approaches, e.g., belief propagation (BP), expectation propagation (EP), etc.,  are of significant interest due to the relatively low computational complexity and higher performance than linear detection.
These methods aim to calculate an approximation of the {\textsl{a posteriori}} probability distribution (or its marginals) of the transmitted symbols.
A hard-decision based on the \textsl{a posteriori} mean or a soft-decision based on the \textsl{a posteriori}  marginals is then performed. 
In \cite{6841617}, EP is first introduced into the massive MIMO signal detection with high-order modulation.
\cite{9484686} proposes a beam domain detector based on the layered BP for massive MIMO systems.
A variant EP detector is proposed in \cite{9139393} based on decentralized processing.
\cite{5961820} proposes a MIMO detector for high-order QAM modulation based on the Gaussian tree approximation.

Information geometry, which is introduced by Rao\cite{rao}, and then formally developed by Amari\cite{amari} and Cencov\cite{cencov}, has found a wide range of applications.
For Bayesian inference, the space defined by the parameters of the \textsl{a posteriori}
probability distribution is regarded as a differentiable manifold with a
Riemannian structure, and the definitions and tools of differential geometry are well applied by Amari et al. \cite{IGanditsAPP,srbpig}.
Amari et al. also show the intrinsic geometric insight of some classical Bayesian inference methods, e.g, the belief propagation (BP) \cite{BP}.
Meanwhile, some optimization methods, such as the concave-convex procedure (CCCP) \cite{CCCP}, are also applied to calculate the marginals of the \textsl{a posteriori} distribution.
On Bayesian inference in communications, \cite{1302292} analyzes the turbo and low-density parity-check (LDPC) codes from the perspective of information geometry, and an improvement of turbo and LDPC codes is  proposed from the geometrical view.
The information geometry is extended to complex signal processing and an information geometry approach is proposed  for massive MIMO channel estimation in \cite{IGA,SIGA}.

In addition to the unique insight that the geometric perspective offers, information geometry also provides us with a unified framework where various sets of probability distributions are considered to be endowed with the structure of differential geometry. 
Hence, we are able to construct a Fisher information matrix (FIM) based distance between two parametrized distributions.
Amari \cite{amari} also shows that this distance is invariant to non-singular transformation of the parameters.
As a result, information geometry is closely related to estimation theory.
%such as the well-known Cram\'{e}r-Rao lower bound.
Due to these advantages, information geometry has recently been applied to many other problems such as the complex network construction \cite{8066330}, the target detection \cite{9479799}, and the clustering \cite{9123433}.
 
In this paper, we propose an information geometry approach for signal detection (IGA-SD) for ultra-massive MIMO systems.
%The system configuration and the problem statement is first given.
We formulate the signal detection as obtaining the marginls of the \textsl{a posteriori} probability distribution of the transmitted symbols.
Then, a component-wise decision can be performed based on the \textsl{a posteriori} marginals.
With the information geometry theory, we calculate the approximations of the \textsl{a posteriori} marginals.
More precisely, by treating the sets of the probability distributions of discrete random vectors with different constraints as several different (sub)manifolds, the calculation of the marginals is converted into an iterative $m$-projection process.
Furthermore, since the calculation of the $m$-projection in signal detection is of exponential-complexity, we apply the central-limit-theorem (CLT) to simplify its calculation.
With the CLT, we are able to find an  approximate solution of the $m$-projection, which is asymptotically accurate.  
At last, a soft-decision is performed based on the approximation of the \textsl{a posteriori} marginls.

The rest of the paper proceeds as follows. The system configuration and problem statement  are presented in Section \uppercase\expandafter{\romannumeral2}. 
Preliminaries of information geometry is introduced in Section \uppercase\expandafter{\romannumeral3}. 
The Information geometry approach for ultra-massive MIMO signal detection  is proposed in Section \uppercase\expandafter{\romannumeral4}. 
Simulation results are provided in Section \uppercase\expandafter{\romannumeral5}. 
The conclusion is drawn in Section \uppercase\expandafter{\romannumeral6}.

Notations: The following notations are adopted in this paper. 
Upper (lower) case boldface letters denote matrices
(column vectors). 
%$\delta\left( \cdot \right)$ denotes the delta function. 
%$\barjmath = \sqrt{-1}$ is the imaginary unit. 
%We use $\lceil x \rceil$ to denote the largest integer not larger than $x$. 
%The notation $\triangleq$ is used for definitions.
$\mathcal{R}\left(\cdot\right)$ and $\mathcal{I}\left(\cdot\right)$ denote the real and imaginary parts of a complex number (matrix), respectively.
The superscripts $\left( \cdot \right)^*$, $\left( \cdot \right)^T$ and $\left( \cdot \right)^H$ denote the conjugate, transpose and conjugate-transpose operator, respectively. 
$\Diag{\bx}$ denotes the diagonal matrix with $\bx$ along its main diagonal and $\diag{\bX}$ denotes a vector consisting of the diagonal elements of $\bX$.
$\Bdiag{\bX_1,\bX_2,\ldots}$ denotes a block diagonal matrix with matrices $\bX_i$ located along the main diagonal.
We use $a_{i,j}$ to denote the $\left(i,j\right)$-th element of the matrix $\bA$, where the element indices start with $1$. 
%$\bbC^{N\times M}$ ($\bbR^{N \times M}$) denotes the $N\times M$ dimensional complex (real) vector space. $\Exp_p\left\lbrace \cdot \right\rbrace $ denotes the expectation operation w.r.t. the distribution $p\left(\bh\right)$. 
$\odot$ and $\otimes$ denote the Hadamard product and  Kronecker product, respectively. 
Define $\setnnega{N} \triangleq \braces{0,1,\ldots,N}$ and $\setposi{N} \triangleq \braces{1,2,\ldots,N}$.
$\setminus$ denotes the set subtraction operation.
%$\mathrm{vec}\left(  \cdot \right)  $ denotes the vectorization operation. 
%$\mathbf{0}$, $\mathbf{1}$ and $\bO$ are the zero vector, all one vector and zero matrix with proper dimension, respectively. 
%$\circ$ is an operator of two vectors with the same dimension and $\ba \circ \bb = \left(\bb^H\ba + \ba^H\bb\right)/2$.
%$\funf[\ba,\bb]$ is a vector function which is defined as $\funf[\ba,\bb] = \left[ \ba^T \ \bb^T \right]^T\in \bbC^{2M\times 1}$ with $\ba, \bb \in \bbC^{M\times 1}$.
%$\ba < b$ means that each element in vector $\ba$ is smaller than the scalar $b$.
%$\ba < \bc$ means that each
%element in vector $\ba$ is smaller than the element in the corresponding position in vector $\bc$.
%$\norm{\bx}_0$ and $\norm{\bx}$ denote the $\ell_0$-norm and $\ell_2$-norm of $\bx$, respectively.
To avoid confusion,  $p\left(\cdot\right)$ and $f\left(\cdot\right)$ denote the probability distribution of discrete random variables and the probability density function (PDF) of continuous random variables, respectively.
$f_{\textrm{CG}}\left( \bx;\bmu,\bSigma \right)$ denotes the PDF of a complex Gaussian distribution $\mathcal{CN}\left( \bmu,\bSigma \right)$ for vector $\bx$ of complex random variables.
$f_{\textrm{G}}\left( \bx;\bmu,\bSigma \right)$ denotes the PDF of a real Gaussian distribution $\mathcal{N}\left( \bmu,\bSigma \right)$ for vector $\bx$ of complex random variables.

\section{System Model and Problem Statement}
%In this section, we describe the configuration of the ultra-massive
%MIMO system. 
%Then, the problem statement of signal detection for the ultra-massive MIMO systems is presented.

\subsection{System Configuration}
Consider an uplink ultra-massive MIMO system where one base station (BS) equipped with an ultra-massive antenna array serves $K$ single-antenna users, and the BS has $N_r$ antennas.
Denote the transmitted symbol vector of all users as $\tilde{\bs} \triangleq \left[\tilde{s}_1,\tilde{s}_2,\ldots,\tilde{s}_K  \right]^T \in\tilde{\mathbb{S}}^K$ where $\tilde{s}_k \in \tilde{\mathbb{S}}, k\in \setposi{K}$, is the  transmitted symbol of the user $k$. 
$\tilde{\mathbb{S}}$ is the signal constellation and let us assume 
$\tilde{\mathbb{S}} = \braces{\tilde{s}^{\left(0\right)},\tilde{s}^{\left(1\right)},\ldots,\tilde{s}^{\left(\tilde{L}-1\right)}}$, 
where $\braces{\tilde{s}^{\left(\ell\right)}}_{\ell=0}^{\tilde{L}-1}$ are the  constellation points, and	$\tilde{L}$ is the modulation order (or constellation size).
In this paper, we focus on the uncoded systems and the symmetric $\tilde{L}$-QAM modulation. We assume that each user chooses  symbols
from $\tilde{\mathbb{S}}$ uniformly at random, and all users use the same alphabet, although the proposed IGA-SD can be readily extended to arbitrary modulations with different distributions as long as the transmitted symbols of the users are statistically independent and the real and imaginary parts of each user's symbol are statistically independent as well.
We also assume that the average power of $\tilde{s}_k$ is normalized to unit, i.e., $\Exp \braces{ \left|\tilde{s}_k\right|^2}  = 1$, $k\in \setposi{K}$.
The symbol vector $\tilde{\bs}$ is then transmitted over a flat-fading complex channel,
and the received signal $\tilde{\by}\in\bbC^{N_r}$ at the BS can be modeled as
\begin{equation}\label{equ:rece model 1}
	\tilde{\by} = \tilde{\bG}\tilde{\bs} + \tilde{\bz},
\end{equation}
where $\tilde{\bG} \in\bbC^{N_r\times K}$ is the channel matrix,  
$\tilde{\bz}$ is an additive white circular-symmetric complex Gaussian noise vector, 
$\tilde{\bz} \sim\mathcal{CN}\left(\mathbf{0},\tilde{\sigma}_z^2\bI\right)$ 
and $\tilde{\sigma}_z^2$ is the noise variance.
In this work, we assume that the BS has perfect channel state information.
As a note, the reason why,  in the above notations, the tildes are added on the tops of the math symbols is that we will later formulate and analyze their real counterparts without the tildes for notational simplicity.

\subsection{Problem Statement}\label{sec:Probelm Statement}
Assuming that the transmitted symbol vector $\tilde{\bs}$ and the noise vector $\tilde{\bz}$ are independent with each other and the symbols transmitted by different users are independent with each other as well. Then, with the received signal model \eqref{equ:rece model 1}, the \textsl{a posteriori} probability distribution of the transmitted symbol vector $\tilde{\bs}$ can be expressed as 
\begin{equation}
	\begin{split}
	   	p\left(\tilde{\bs}|\tilde{\by}\right) &\propto p_{\textrm{pr}}^{\textrm{c}}\left(\tilde{\bs}\right)f\left(\tilde{\by}|\tilde{\bs}\right)\\
	   	&=\prod_{k=1}^{K}p_{\textrm{pr},k}^{\textrm{c}}\left(\tilde{s}_k\right)f_{\textrm{CG}}\left(\tilde{\by};\tilde{\bG}\tilde{\bs},\sigma_z^2\bI\right),
	\end{split}
\end{equation}
where 
$p_{\textrm{pr}}^{\textrm{c}}\left(\tilde{\bs}\right)$ is the a priori probability distribution of the complex transmitted symbol vector $\tilde{\bs}$,
$f\left(\tilde{\by}|\tilde{\bs}\right)$ is the PDF of the received signal $\tilde{\by}$ given $\tilde{\bs}$,
$p_{\textrm{pr},k}^{\textrm{c}}\left(\tilde{s}_k\right)$ is the a priori probability of the complex symbol transmitted by user $k$,
$p_{\textrm{pr},k}^{\textrm{c}}\left(\tilde{s}_k\right)\big|_{\tilde{s}_k = \tilde{s}^{\left(\ell\right)}} = {1}/{\tilde{L}}, k\in \setposi{K}, \ell \in \setnnega{\tilde{L}-1}$.
Given the \textsl{a posteriori} probability distribution $p\left(\tilde{\bs}|\tilde{\by}\right)$, the  MAP detector (or the ML detector in this case) is  given by
\begin{equation}\label{equ:MAP}
	\tilde{\bs}_{\textrm{MAP}} = \argmax{\tilde{\bs}\in\tilde{\mathbb{S}}^K}p\left(\tilde{\bs}|\tilde{\by}\right),
\end{equation}
which minimizes the error probability that $\tilde{\bs}_{\textrm{MAP}}$ does not coincide with the true one.
The calculation of the MAP detector is unaffordable for practical ultra-massive MIMO systems  since the number of candidates of $\tilde{\bs}$ increases exponentially w.r.t. $K$ and  \eqref{equ:MAP} is NP-hard.

Before proceeding, we reformulate the complex-valued received signal model \eqref{equ:rece model 1}
into a real-valued one, which is necessary for developing IGA-SD in this paper.
Define real vectors 
%$\by \triangleq [ \mathcal{R}^T\braces{\tilde{\by}},\mathcal{I}^T\braces{\tilde{\by}} ]^T \in \bbR^{2N_r}$, $\bs \triangleq [\mathcal{R}^T\braces{\tilde{\bs}},\mathcal{I}^T\braces{\tilde{\bs}} ]^T \in \bbR^{2K}$, $\bz \triangleq [\mathcal{R}^T\braces{\tilde{\bz}},\mathcal{I}^T\braces{\tilde{\bz}}  ]^T\in \bbR^{2N_r}$ 
\begin{subequations}
\begin{equation}\label{equ:real y and z}
	\by \triangleq \left[
	\begin{array}{*{20}{c}}
		\mathcal{R}\braces{\tilde{\by}} \\
		\mathcal{I}\braces{\tilde{\by}} 
	\end{array} 
	\right],
	\bz \triangleq \left[
	\begin{array}{*{20}{c}}
		\mathcal{R}\braces{\tilde{\bz}} \\
		\mathcal{I}\braces{\tilde{\bz}} 
	\end{array} 
	\right] 
	\in \bbR^{2N_r}
\end{equation}
\begin{equation}\label{equ:real s}
		\bs \triangleq \left[
	\begin{array}{*{20}{c}}
		\mathcal{R}\braces{\tilde{\bs}} \\
		\mathcal{I}\braces{\tilde{\bs}} 
	\end{array} 
	\right] 
	\in \bbR^{2K},
\end{equation}
\end{subequations}
and a real matrix
\begin{equation}\label{equ:definition of G}
	\bG \triangleq \left[
	\begin{array}{*{20}{c}}
		\mathcal{R}\braces{\tilde{\bG}}, &-\mathcal{I}\braces{\tilde{\bG}}\\
		\mathcal{I}\braces{\tilde{\bG}} ,&\mathcal{R}\braces{\tilde{\bG}}
	\end{array} 
	\right]\in \bbR^{2N_r\times 2K}.
\end{equation}
Then, we can obtain the real-valued received signal model as
\begin{equation}\label{equ:rece model}
	\by = \bG\bs + \bz,
\end{equation}
where $\bs = \left[ s_1,s_2,\ldots,s_{2K} \right]^T \in \mathbb{S}^{2K}$, 
$s_k \in \mathbb{S}, k\in \setposi{2K}$, 
$\mathbb{S} = \braces{s^{\left(0\right)},s^{\left(1\right)}, \ldots,s^{\left( L-1 \right)}  }$ is the alphabet for the real and imaginary components of a symmetric $\tilde{L}$-QAM modulation where the index starts from $0$ is for the representation convenience of the likelihood ratio detection later, 
$L = \sqrt{\tilde{L}}$,
$\bz \sim \mathcal{N}\left(\mathbf{0},\sigma_z^2\bI\right)$ is the noise vector,
and $\sigma_z^2 = {\tilde{\sigma}_z^2}/{2}$.
Given the received signal model \eqref{equ:rece model}, the \textsl{a posteriori} distribution of $\bs$ can be expressed as
\begin{equation}\label{equ:bayesian 1}
	\begin{split}
		p\left(\bs|\by\right) &\propto \prod_{k=1}^{2K}p_{\textrm{pr},k}\left(s_k\right)\prod_{n=1}^{2N_r}f\left(y_n|\bs\right)\\
		&\propto \prod_{k=1}^{2K}p_{\textrm{pr},k}\left(s_k\right)\prod_{n=1}^{2N_r}\exp\braces{-\frac{\left(y_n-\be_n^T\bG\bs\right)^2}{2\sigma_z^2}},
	\end{split}
\end{equation}
where 
$p_{\textrm{pr},k}\left(s_k\right)\big|_{s_k = s^{\left(\ell\right)}} = \frac{1}{L}, k\in \setposi{2K}, \ell \in \setnnega{L-1}$, is the a priori probability of $s_k$,
$y_n$ is the $n$-th element of $\by$,
$f\left(y_n|\bs\right)$ is the PDF of $y_n, n\in \setposi{2N_r}$, given $\bs$,
and $\be_n \in \bbC^{2N_r}$ is the $n$-th column of the $2N_r$ dimensional identity matrix.
In this work, we propose an information geometry approach for signal detection which aims to obtain the approximations of the marginals, i.e., $p_k\left(s_k|\by\right), k\in \setposi{2K}$,	of the \textsl{a posteriori} distribution $p\left(\bs|\by\right)$, which can be used for the maximization of the \textsl{a posteriori} marginals (MPM) detector, i.e., for $k\in \setposi{2K}$,
\begin{equation}\label{equ:MPM dete}
	s_{k,\textrm{MPM}} = \argmax{s_k\in \mathbb{S}}p_k\left(s_k|\by\right).
\end{equation}

\section{Preliminaries of Information Geometry}\label{sec:pre of IG}
In this section, we briefly introduce the information geometry approach (IGA), where more details can be found in \cite{srbpig,1302292,IGA,IGanditsAPP}.
We begin with the exponential family.
Consider a discrete random vector $\bx \in \mathbb{X}$ with finite dimension, where each scalar random variable in $\bx$ takes finite values and $\mathbb{X}$ is a finite set.
The probability distribution of $\bx$ is said to belong to the exponential family if it can be expressed as 
\begin{equation}\label{equ:exp family}
	p\left(\bx;\btheta\right) = \exp\braces{\btheta^T\bt - \psi\left( \btheta \right)},
\end{equation}
where $\bt$ is a sufficient statistic of random vector $\bx$, 
$\btheta$ is the natural parameter (NP) of $p\left(\bx;\btheta\right)$, 
and $\psi\left(\btheta\right)$ is the free energy, which makes $p\left(\bx;\btheta\right)$ a probability distribution, i.e., $\sum_{\bx}p\left(\bx;\btheta\right) = 1$.
We then introduce the $e$-flat manifold that is needed for m-projections later \cite{srbpig,IGanditsAPP,1302292}.
Consider a manifold $\mathcal{U}$, which is defined as a set of probability distributions of $\bx$, e.g., $\mathcal{U} = \braces{p\left(\bx\right)}$, where each element in $\mathcal{U}$, i.e., $p\left(\bx\right)$, is a particular probability distribution of $\bx$.
$\mathcal{U}$ is said to be $e$-flat if for all $0\le d\le 1$, $p_i\left(\bx\right), p_j\left(\bx\right) \in \mathcal{U}$, the following $q\left(\bx;d\right)$ belongs to $\mathcal{U}$,
\begin{equation}
	q\left(\bx;d\right) = \left(1-d\right)\ln p_i\left(\bx\right) + d\ln p_j\left(\bx\right) + c_{\textrm{na}}\left(d\right),
\end{equation}
where $c_{\textrm{na}}\left(d\right)$ is a normalization constant makes  $q\left(\bx;d\right)$ a probability distribution.
From the definition, any exponential family is $e$-flat.
Suppose  $\mathcal{V} = \braces{q\left(\bx\right)} \subseteq \mathcal{U}$ is an $e$-flat submanifold.
Given $p\left(\bx\right) \in \mathcal{U}$, the point (probability distribution) in $\mathcal{V}$ that minimizes the Kullback-Leibler (K-L) divergence from $p\left(\bx\right)$ to $\mathcal{V}$, i.e.,
\begin{equation}
	q^{\star}\left(\bx\right) = \argmin{q\left(\bx\right)\in \mathcal{V}} \Dkl{p\left(\bx\right)}{q\left(\bx\right)},
\end{equation} 
is called the $m$-projection of $p\left(\bx\right)$ onto $\mathcal{V}$, where the K-L divergence is defined by
\begin{equation}
	\Dkl{p\left(\bx\right)}{q\left(\bx\right)} = \sum_{\bx\in \mathbb{X}}p\left(\bx\right)\ln\left(\frac{p\left(\bx\right)}{q\left(\bx\right)}\right).
\end{equation}

We now give the preliminaries of IGA in Bayesian inference.
Let $\bx_{\textrm{h}} \in \bbR^{N_{\textrm{h}}}$ and $\by_{\textrm{o}} \in \bbR^{N_{\textrm{o}}}$ be hidden and observed random vectors, respectively. 
Denote the \textsl{a posteriori} distribution as $p\left(\bx_{\textrm{h}}|\by_{\textrm{o}}\right)$. 
Our goal is to calculate the approximations
of the \textsl{a posteriori} marginals, i.e., $p\left(x_{\textrm{h},i}|\by_{\textrm{o}}\right)$, where $x_{\textrm{h},i}$ is the $i$-th component of $\bx_{\textrm{h}}$ and $i\in \setposi{N_{\textrm{h}}}$.
In this paper, we focus on the following case:
all the components of $\bx_{\textrm{h}}$ are independent and all the components of $\by_{\textrm{o}}$ given  $\bx_{\textrm{h}}$ are independent  as well.
The  \textsl{a posteriori} distribution can be then expressed as
\begin{equation}
	p\left( \bx_{\textrm{h}}|\by_{\textrm{o}} \right) \propto p\left(\bx_{\textrm{h}}\right)p\left(\by_{\textrm{o}}|\bx_{\textrm{h}}\right) = \prod_{i=1}^{N_{\textrm{h}}}p_i\left(x_{\textrm{h},i}\right)\prod_{n=1}^{N_{\textrm{o}}}p_n\left(y_{\textrm{o},n}|\bx_{\textrm{h}}\right),
\end{equation}
where $p_i\left(x_{\textrm{h},i}\right)$ and $p_n\left(y_{\textrm{o},n}|\bx_{\textrm{h}}\right)$ are the marginals of $p\left(\bx_{\textrm{h}}\right)$ and $p\left(\by_{\textrm{o}}|\bx_{\textrm{h}}\right)$, respectively, and $y_{\textrm{o},n}$ is the $n$-th component  of $\by_{\textrm{o}}$.
Suppose that the a priori marginals $\braces{p_i\left(x_{\textrm{h},i}\right)}_{i=1}^{N_\textrm{h}}$ belong to an exponential family, and each of them can be expressed as
\begin{equation}
	p_i\left(x_{\textrm{h},i}\right) = p_i\left(x_{\textrm{h},i};\bd_{\textrm{h},i}\right) = \exp\braces{\bd_{\textrm{h},i}^T\bt_{\textrm{h},i} - \psi\left( \bd_{\textrm{h},i} \right)},
\end{equation}
where $\bd_{\textrm{h},i}\in \bbR^{N_{i}}$ is the NP of $p_i\left(x_{\textrm{h},i};\bd_{\textrm{h},i}\right)$, 
$\bt_{\textrm{h},i}\in \bbR^{N_{i}}$ is a sufficient statistic of the single random variable $x_{\textrm{h},i}$, e.g., $x_{\textrm{h},i}$ and $x_{\textrm{h},i}^2$,
and $\psi\left( \bd_{\textrm{h},i} \right)$ is the free energy.
As we shall see later in this section, the probability distributions of discrete random vectors belong to the exponential family.
Meanwhile, suppose that the marginals of the conditional probability distribution can be expressed as
\begin{equation}
	p_n\left(y_{\textrm{o},n}|\bx_{\textrm{h}}\right) = \exp\braces{c_n\left( {\bx_{\textrm{h}}},y_{\textrm{o},n} \right) - \psi_n}, n\in \setposi{N_{\textrm{o}}},
\end{equation}
where $c_n\left( {\bx_{\textrm{h}}},y_{\textrm{o},n} \right)$ is a polynomial of $\bx_{\textrm{h}}$ which is parameterized by the variables including $y_{\textrm{o},n}$,
and $\psi_n$ is the normalization factor.
$c_n\left( {\bx_{\textrm{h}}},y_{\textrm{o},n} \right)$ above often contains the interactions between the random variables of $\bx_{\textrm{h}}$, e.g., the cross-terms $x_{\textrm{h},i}x_{\textrm{h},j}, i\neq j$.
A more detailed $c_n\left( {\bx_{\textrm{h}}},y_{\textrm{o},n} \right)$ will occur in the next section.
In this case, the \textsl{a posteriori} probability distribution can be expressed as
\begin{equation}\label{equ:post distribution ori}
	p\left( \bx_{\textrm{h}}|\by_{\textrm{o}} \right) = \exp\braces{\bd_{\textrm{h}}^T\bt_{\textrm{h}} + \sum_{n=1}^{N_{\textrm{o}}}c_n\left(\bx_{\textrm{h}},y_{\textrm{o},n}\right) - \psi_q   },
\end{equation}
where $\bd_{\textrm{h}} = \left[  \bd_{\textrm{h},1}^T, \bd_{\textrm{h},2}^T, \ldots, \bd_{\textrm{h},N_{\textrm{h}}}^T    \right]^T \in \bbR^{N_{a}}$,
$\bt_{\textrm{h}} = \left[  \bt_{\textrm{h},1}^T, \bt_{\textrm{h},2}^T, \ldots, \bt_{\textrm{h},N_{\textrm{h}}}^T    \right]^T \in \bbR^{N_{a}}$,
$N_a = \sum_{i=1}^{N_{\textrm{h}}}N_{i}$,
and $\psi_q$ is the normalization factor.
In \eqref{equ:post distribution ori}, $\bt_{\textrm{h}}$ only contains the separated random variables (i.e., no cross-terms of them), and all the interactions (cross-terms) between the random variables are included in $c_n\left(\bx_{\textrm{h}},y_{\textrm{o},n}\right), n\in \setposi{N_{\textrm{o}}}$.
IGA aims to approximate $\sum_{n=1}^{N_{\textrm{o}}}c_n\left(\bx_{\textrm{h}},y_{\textrm{o},n}\right)$ as $\btheta_{0}^T\bt_{\textrm{h}}$, where $\btheta_{0} \in \bbR^{N_{a}}$, i.e., IGA aims to approximate the summation of all the cross-terms into a summation of non-cross-terms of the random variables, when $N_a$ is large.
In this case, we have
\begin{equation}
		p\left( \bx_{\textrm{h}}|\by_{\textrm{o}} \right) \approx p_0\left(\bx_{\textrm{h}};\btheta_{0}\right) = \exp\braces{\left(\bd_{\textrm{h}} + \btheta_{0}\right)^T\bt_{\textrm{h}} - \psi_0\left(\btheta_{0}\right)},
\end{equation}
where $\psi_0\left(\btheta_{0}\right)$ is the normalization factor.
The marginals of $p_0\left(\bx_{\textrm{h}};\btheta_0\right)$, i.e., $p_0\left( x_{\textrm{h},i};\btheta_{0} \right), i\in \setposi{N_{\textrm{h}}}$, can be calculated easily since $p_0\left(\bx_{\textrm{h}};\btheta_0\right)$ contains no interactions between the random variables $\braces{x_{\textrm{h},i}}_{i=1}^{N_{\textrm{h}}}$. 
To obtain $\btheta_{0}$, we construct two types of manifolds and compute the approximation for each $c_n\left(\bx_{\textrm{h}},y_{\textrm{o},n}\right)$ in an iterative manner, which is 
denoted as $\bxi_{n}^T\bt_{\textrm{h}}$.
At last, $\btheta_{0}$ is calculated as
$\btheta_0 = \sum_{n=1}^{N_\textrm{o}}\bxi_n$.
The two types of manifolds are the objective manifold (OBM) and the auxiliary manifold (AM).
The OBM $\mathcal{M}_0$ is defined as the set of probability distributions of random vector $\bx_{\textrm{h}}$, of which all the components are independent with each other, i.e,
\begin{subequations}\label{equ:OBM}
	\begin{equation}
		\mathcal{M}_0 = \braces{	p_0\left(\bx_{\textrm{h}};\btheta_{0}\right)|\btheta_{0} \in \bbR^{N_a}},
	\end{equation}
	\begin{equation}
		\begin{split}
		p_0\left(\bx_{\textrm{h}};\btheta_{0}\right) &= \prod_{i=1}^{N_{\textrm{h}}}p_{0,i}\left(x_{\textrm{h},i};\btheta_{0,i}\right)\\
		& = \exp\braces{\left(\bd_{\textrm{h}} + \btheta_{0}\right)^T\bt_{\textrm{h}} - \psi_0\left(\btheta_{0}\right)},
		\end{split}
	\end{equation}
	\begin{equation}
		p_{0,i}\left(x_{\textrm{h},i};\btheta_{0,i}\right)\! =\! \exp\braces{\left(\bd_{\textrm{h},i} + \btheta_{0,i}\right)^T\bt_{\textrm{h},i} - \psi_0\left(\btheta_{0,i}\right)},
	\end{equation}
\end{subequations}
where $\btheta_{0} = \left[\btheta_{0,1}^T, \btheta_{0,2}^T, \ldots, \btheta_{0,N_{\textrm{h}}}^T   \right]^T \in \bbR^{N_a}$,
$\btheta_{0,i} \in \bbR^{N_i}$,
$p_{0,i}\left(x_{\textrm{h},i};\btheta_{0,i}\right)$ is the marginal distribution of $p_0\left(\bx_{\textrm{h}};\btheta_{0}\right)$, 
$\psi_0\left( \btheta_{0} \right) = \sum_{i=1}^{N_\textrm{h}} \psi_0\left(\btheta_{0,i}\right)$ is the free energy (normalization factor) of $p_0\left(\bx_{\textrm{h}};\btheta_{0}\right)$,
and $ \psi_0\left(\btheta_{0,i}\right) $ is the free energy of $ p_{0,i}\left(x_{\textrm{h},i};\btheta_{0,i}\right) $.
$\btheta_{0}$ above is referred as to the $e$-affine coordinate system or the natural parameter of $p_0\left(\bx_{\textrm{h}};\btheta_{0}\right)$. 
And 
$\btheta_{0,i}$ is referred as to the $e$-affine coordinate system or the natural parameter of $p_{0,i}\left(x_{\textrm{h},i};\btheta_{0,i}\right)$.
To avoid confusion with the natural parameter of the exponential family, we refer to $\btheta_{0}$ as the $e$-affine coordinate system (abbreviated as EACS) of $p_0\left(\bx_{\textrm{h}};\btheta_{0}\right)$ in this paper (similar with $\btheta_{0,i}$ and $p_{0,i}\left(x_{\textrm{h},i};\btheta_{0,i}\right)$).
Then, $N_{\textrm{o}}$ AMs are defined, where the $n$-th of them is expressed as
\begin{subequations}
	\begin{equation}
		\mathcal{M}_n = \braces{p_n\left(\bx_{\textrm{h}};\btheta_{n}\right)|\btheta_{n}\in \bbR^{N_a}},
	\end{equation}
	\begin{equation}\label{equ:pn preli}
		p_n\left(\bx_{\textrm{h}};\btheta_{n}\right) \!=\! \exp\braces{\left(\bd_{\textrm{h}} \!+\! \btheta_{n}\right)^T\bt_{\textrm{h}} \!+\! c_n\left(\bx_{\textrm{h}},y_{\textrm{o},n}\right) \!-\! \psi_n\left(\btheta_{n}\right)},
	\end{equation}
\end{subequations}
where $\btheta_{n}$ is referred as to the EACS of $p_n\left( \bx_{\textrm{h}};\btheta_{n} \right)$ and $\psi_n\left(\btheta_{n}\right)$ is the free energy.
It can be readily checked that the OBM and the AMs are all $e$-flat.
Only one interaction term $c_n\left(\bx_{\textrm{h}},y_{\textrm{o},n}\right)$ is remained in $p_n\left(\bx_{\textrm{h}};\btheta_{n}\right)$, and all the others, i.e., $\sum_{n'\neq n}c_{n'}\left(\bx_{\textrm{h}},y_{\textrm{o},n'}\right)$ are replaced as $\btheta_n^T\bt_{\textrm{h}}$.
Assume that the EACS $\btheta_{n}$ of $p_n\left(\bx_{\textrm{h}};\btheta_{n}\right), n\in \setposi{N_\textrm{o}}$, is given, we calculate the approximation of $c_n\left(\bx_{\textrm{h}},y_{\textrm{o},n}  \right)$ from the $m$-projection of $p_n\left(\bx_{\textrm{h}};\btheta_{n}\right)$ onto the OBM $\mathcal{M}_0$.
Denote the $m$-projection of $p_n\left(\bx_{\textrm{h}};\btheta_{n}\right)$ onto $\mathcal{M}_0$ as $p_0\left(\bx_{\textrm{h}};\btheta_{0n}\right)$, where $\btheta_{0n} \in \bbR^{N_a}$, and
\begin{equation}\label{equ:theta_{0n} in preli}
	\btheta_{0n} = \argmin{\btheta_{0}\in \bbR^{N_a}}\Dkl{p_n\left(\bx_{\textrm{h}};\btheta_{n}\right)}{p_0\left(\bx_{\textrm{h}};\btheta_{0}\right)}.
\end{equation}
We shall see a more specific example about the calculation of the $m$-projection in the next section.
After $\btheta_{0n}$ is obtained,
we express the $m$-projection $p_0\left(\bx_{\textrm{h}};\btheta_{0n}\right)$ as
\begin{equation}\label{equ:mp preli}
	\begin{split}
	   	p_0\left(\bx_{\textrm{h}};\btheta_{0n}\right) &= \exp\braces{\left( \bd_{\textrm{h}}+\btheta_{0n} \right)^T\bt_{\textrm{h}} - \psi_0\left(\btheta_{0n}\right)}\\
	   	&= \exp\braces{\left( \bd_{\textrm{h}}+\btheta_{n}+\bxi_{n} \right)^T\bt_{\textrm{h}} - \psi_0\left(\btheta_{0n}\right)},
	\end{split}
\end{equation}
where the EACS $\btheta_{0n}$ of $p_0\left(\bx_{\textrm{h}};\btheta_{0n}\right)$ is regarded as the sum of the EACS $\btheta_{n}$ of $p_n\left(\bx_{\textrm{h}};\btheta_{n}\right)$ and an extra item $\bxi_{n}$.
If we compare the last equation of \eqref{equ:mp preli} and $p_n\left(\bx_{\textrm{h}};\btheta_{n}\right)$ in \eqref{equ:pn preli}, it can be found that in the $m$-projection $p_0\left(\bx_{\textrm{h}};\btheta_{0n}\right)$, the interaction item $c_n\left(\bx_{\textrm{h}},y_{\textrm{o},n}\right)$ is replaced by $\bxi_{n}^t\bt_{\textrm{h}}$.
Hence, $\bxi_{n}^T\bt_{\textrm{h}}$ is regarded as the approximation of $c_n\left(\bx_{\textrm{h}},y_{\textrm{o},n}\right)$, and we calculate the approximation item $\bxi_{n}$ as
\begin{equation}\label{equ:xi_n in preli}
	\bxi_{n} = \btheta_{0n} - \btheta_{n}, n\in \setposi{N_{\textrm{o}}}.
\end{equation}
Then, $p_0\left(\bx_{\textrm{h}};\btheta_{0}\right)$ with $\btheta_{0} = \sum_{n=1}^{N_{\textrm{o}}}\bxi_{n}$ is considered as the approximation of the \textsl{a posteriori} distribution $p\left(\bx_{\textrm{h}}|\by_{\textrm{o}}\right)$.
Meanwhile, note that the whole process is proceeded in an iterative manner since the EACSs $\braces{\btheta_{n}}_{n=1}^{N_\textrm{o}}$ are not known at first.
To be specific, we first  initialize  the EACSs as $ \braces{\btheta_{n}\left(0\right)}_{n=0}^{N_\textrm{o}}$.
Given the EACS $\btheta_{0}\left(t\right)$ of $p_0\left(\bx_{\textrm{h}};\btheta_{0}\left(t\right)\right)$ and the EACS $\btheta_{n}\left(t\right)$ of $p_n\left(\bx_{\textrm{h}};\btheta_{n}\left(t\right)\right), n\in \setposi{N_{\textrm{o}}}$, at the $t$-th time, we  calculate $\btheta_{0n}\left(t\right)$ and $\bxi_{n}\left(t\right), n\in \setposi{N_\textrm{o}}$, as \eqref{equ:theta_{0n} in preli} and \eqref{equ:xi_n in preli}, respectively.
We then update the EACS of $p_n\left(\bx_{\textrm{h}};\btheta_{n}\left(t\right)\right), n\in \setposi{N_\textrm{o}}$, as 
\begin{equation}
	\btheta_{n}\left(t+1\right) = \sum_{n' = 1, n' \neq n}^{N_\textrm{o}}\bxi_{n'}\left(t\right),
\end{equation}
since  $\btheta_{n}^T\left(t+1\right)\bt_{\textrm{h}}$ replaces $\sum_{n' \neq n}c_{n'}\left(\bx_{\textrm{h}},y_{\textrm{o},n}\right)$ in $p_n\left(\bx_{\textrm{h}};\btheta_{n}\left(t+1\right)\right)$ and each interaction term $c_n\left( \bx_{\textrm{h}},y_{\textrm{o},n} \right)$ is approximated as $\bxi_{n}^T\left(t\right)\bt_{\textrm{h}}$ at the $t$-th time.
The EACS of $p_0\left(\bx_{\textrm{h}};\btheta_{0}\left(t\right)\right)$ is updated as $\btheta_{0}\left(t+1\right) = \sum_{n=1}^{N_\textrm{o}}\bxi_{n}\left(t\right)$ as mentioned above.
Then, repeat the $m$-projection, calculate the approximation terms $\braces{\bxi_{n}}_{n=1}^{N_\textrm{o}}$ and the updates until convergence.
We now discuss about the damped updating.
In practice, to improve the convergence of the IGA, the EACSs $\braces{\btheta_{n}}_{n=0}^{N_\textrm{o}}$ are usually updated in a damped way, i.e.,
\begin{subequations}\label{equ:update of NPs}
	\begin{equation}
		\btheta_{n}\left(t+1\right) = \alpha \sum_{n' = 1, n'\neq n}^{N_\textrm{o}}\bxi_{n'}\left(t\right) + \left(1-\alpha\right)\btheta_{n}\left(t\right), n\in \setposi{N_\textrm{o}},
	\end{equation}
\begin{equation}
	\btheta_{0}\left(t+1\right) = \alpha\sum_{n=1}^{N_\textrm{o}}\bxi_n\left(t\right)+\left(1-\alpha\right)\btheta_{0}\left(t\right),
\end{equation}
\end{subequations}
where $0<\alpha\le 1$ is the damping.

At the end of this section, we formulate a  probability distribution of discrete random vectors as one in  the exponential family.
Consider an $N$ dimensional discrete random vector $\bx \in \mathbb{X}$, where 
each component of $\bx$ takes only finite values,
$\mathbb{X}=\braces{\bx^{\left(0\right)}, \bx^{\left(1\right)},\ldots,\bx^{\left(N_{\textrm{x}}-1\right)}}$,
and $N_{\textrm{x}} \ge 2$ is the number of all possible vectors of $\bx$. 
Denote the probability distribution of $\bx$ as $p\left(\bx\right)$ and the probability of $\bx$ taking the value $\bx^{\left(i\right)}$ as $p\left(\bx\right)\big|_{\bx = \bx^{\left(i\right)}} = p_i >0, i\in \setnnega{N_\textrm{x}-1}$.
Denote the set of probability distributions of $\bx$ as
\begin{equation}
	\mathcal{X} = \braces{p\left(\bx\right)\Big|p\left(\bx\right)>0, \bx\in \mathbb{X},  \sum_{\bx\in\mathbb{X}}p\left(\bx\right) = 1}.
\end{equation}
For the discrete probability distributions, let
\begin{equation}
	t_{\textrm{x},i} = \delta\left(\bx-\bx^{\left(i\right)}\right)=
	\begin{cases}
		1, &\textrm{when}\quad  \bx = \bx^{\left(i\right)},\\
		0, &\textrm{otherwise},
	\end{cases}
\end{equation}
where $i\in \setnnega{N_{\textrm{x}-1}}$. 
Then, the probability distribution of $\bx$ can be rewritten as 
\begin{align}\label{equ:express of px}
		p\left(\bx\right) = \sum_{\bx\in \mathbb{X}} p\left(\bx\right)\big|_{\bx = \bx^{\left(i\right)}}\delta\left(\bx-\bx^{\left(i\right)}\right)
		=\sum_{i=0}^{N_\textrm{x}-1}p_i t_{\textrm{x},i},
\end{align}
where $\braces{p_i}_{i=0}^{N_\textrm{x}-1}$ are positive values and  constrained by $\sum_{i=0}^{N_\textrm{x}-1}p_i=1$. Hence, $\mathcal{X}$ has $N_\textrm{x}-1$ degrees of freedom and is a $N_\textrm{x}-1$ dimensional manifold \cite{1302292}.
Since the dimension of $\mathcal{X}$ is $N_\textrm{x}-1$, we define an $N_\textrm{x}-1$ dimensional parameter vector as $\btheta_\textrm{x} = \left[ \theta_{\textrm{x},1},\theta_{\textrm{x},2},\ldots,\theta_{\textrm{x},N_\textrm{x}-1} \right]^T$, where each component is given by
\begin{equation}
	\theta_{\textrm{x},i} = \ln\left(  \frac{p_i}{p_0} \right), i\in \setposi{N_\textrm{x}-1}.
\end{equation}
Then,
\begin{equation}
		p\left(\bx\right) 
		=\exp\braces{\btheta_{\textrm{x}}^T\bt_{\textrm{x}}- \psi\left(\btheta_{\textrm{x}}\right) },
\end{equation}
where $\bt_{\textrm{x}} = \left[t_{\textrm{x},1}, t_{\textrm{x},2},\ldots,t_{\textrm{x},N_\textrm{x}-1}  \right]^T$ is a random vector of $N_\textrm{x}-1$ dimension, and
\begin{equation}
	\psi\left(\btheta_{\textrm{x}}\right) = -\ln p_0.
\end{equation}
The above expresses $\mathcal{X}$ is expressed in terms of an exponential family, and $\btheta_{\textrm{x}}$ is the NP of $p\left(\bx\right)$.

\section{Information Geometry Approach for Signal Detection}

\begin{algorithm}[h]
	\SetAlgoNoLine % 去掉算法中的竖线
	%	\SetKwInOut{Input}{\textbf{Input}}\SetKwInOut{Output}{\textbf{Output}} % 替换关键词
%	\footnotesize % 缩小表格字体
	\caption{IGA-SD}
	\label{Alg:IGA}
	
	\KwIn{The a priori probability $p_{\textrm{pr},k}\left(s_k\right), k\in \setposi{2K}$, the received signal $\by$, the channel matrix $\bG$, the alphabet $\mathbb{S} = \braces{s^{\left(0\right)},s^{\left(1\right)}, \ldots,s^{\left( L-1 \right)}}$ for the components of $\bs$, the noise power ${\sigma}_z^2$ and the maximal iteration number $t_{\mathrm{max}}$.}
	
	\textbf{Initialization:} set $t=0$, set damping $\alpha$, where $0< \alpha\le 1$, initialize the EACSs $\btheta_{n}, n\in \setnnega{2N_r}$, which are defined in \eqref{equ:EACS of p0} and \eqref{equ:EACS of pn}, zeros are sufficient for their initializations in general, calculate the NP $d_{k,\ell}, k\in \setposi{2K}, \ell \in \setposi{L-1}$, as \eqref{equ:d_{k,l}}; 
	
	\Repeat{\rm{Convergence or $t > t_{\mathrm{max}}$}}{
		1. Calculate $\bxi_n{\left(t\right)}, n\in \setposi{2N_r}$, as \eqref{equ:xi_{n,k}} and \eqref{equ:xi_{n,k,l}};\\
%		where the intermediate variables are given by \eqref{equ:mean and variance of marginals of p0}, \eqref{equ:variance of Y_{n,k}} and  \eqref{equ:tilde mu_{n,k}};\\
		2. Update the EACSs as
		\eqref{equ:update of NPs IGA};\\
		3. $t = t+1$;}
	
	\KwOut{\rm{The probability of the approximate marginal, $p_k\left( s_k|\by\right)$, is given by the probability of $p_{0,k}\left(s_k;\btheta_{0,k}\right)$, $k\in \setposi{2K}$, which is given by \eqref{equ:probability of marginals of p0}. Then, the MPM detection is given by \eqref{equ:MPM dete}.}}
\end{algorithm}

As discussed in Sec. \ref{sec:pre of IG},  $p_{\textrm{pr},k}\left(s_k\right), k\in \setposi{2K}$, belong to the exponential family.
Define a sufficient statistic as $\bt_k \triangleq \left[t_{k,1},t_{k,2},\ldots,t_{k,L-1}  \right]^T \in \bbR^{\left(L-1\right)}$, 
where $k\in \setposi{2K}, \ell \in \setposi{L-1}$.
Define the NP as $\bd_k \triangleq \left[ d_{k,1},d_{k,2},\ldots,d_{k,L-1} \right]^T \in \bbR^{\left(L-1\right)}, k\in \setposi{2K}$, and
\begin{equation}\label{equ:d_{k,l}}
	d_{k,\ell} =\ln \frac{p_{\textrm{pr},k}\left(s_k\right)\big|_{s_k = s^{\left(\ell\right)}}}{p_{\textrm{pr},k}\left(s_k\right)\big|_{s_k = s^{\left(0\right)}}}, \ell \in \setposi{L-1}.
\end{equation}
Then, $p_{\textrm{pr},k}\left(s_k\right), k\in \setposi{2K}$, can be expressed as
\begin{equation}\label{equ:p_k in exp family}
	p_{\textrm{pr},k}\left(s_k\right) = \exp\braces{\bd_k^T\bt_k - \psi\left( \bd_k \right)},
\end{equation}
where $\psi\left( \bd_k \right) = -\ln \left(p_{\textrm{pr},k}\left(s_k\right)\big|_{s_k = s^{\left(0\right)}}  \right)$ is the free energy.
Combining with \eqref{equ:p_k in exp family}, the \textsl{a posteriori} distribution $p\left(\bs|\by\right)$ can be expressed as
\begin{align}\label{equ:post distribution}
	p\left(\bs|\by\right) &= \exp\braces{\sum_{k=1}^{2K}\bd_k^T\bt_k + \sum_{n=1}^{2N_r}c_n\left( \bs,y_n \right)  - \psi_q  }\nonumber\\
	&= \exp\braces{  \bd^T\bt + \sum_{n=1}^{2N_r}c_n\left( \bs,y_n \right) - \psi_q },
\end{align}
where $\bd = \left[ \bd_1^T,\bd_2^T,\ldots,\bd_{2K}^T,  \right]^T \in \bbR^{2K\left(L-1\right)}$, $\bt = \left[ \bt_1^T,\bt_2^T,\ldots,\bt_{2K}^T \right]^T \in \bbR^{2K\left(L-1\right)}$, $\psi_q$ is the normalization factor, and
\begin{subequations}
	\begin{equation}
		c_n\left( \bs,y_n \right) = -\frac{1}{2\sigma_z^2}\left( y_n - \be_n^T\bG\bs \right)^2,
	\end{equation}
\begin{equation}
	\psi_q = \ln \left( \sum_{\bs \in \mathbb{S}^{2K}}\exp\braces{ \bd^T\bt +  \sum_{n=1}^{2N_r}c_n\left( \bs,y_n \right) } \right).
\end{equation}
\end{subequations}
According to \eqref{equ:post distribution}, we can immediately define the OBM and the AMs as in the previous section.
The OBM is defined as
\begin{subequations}
	\begin{equation}
		\mathcal{M}_0 = \braces{p_0\left(\bs;\btheta_0\right)\Big|\btheta_0\in \bbR^{2K\left(L-1\right) }  },
	\end{equation}
\begin{equation}\label{equ:defintion of p_0}
	\begin{split}
	p_0\left(\bs;\btheta_0\right) &= \prod_{k=1}^{2K}p_{0,k}\left(s_k;\btheta_{0,k}\right)\\
	&= \exp\braces{\bd^T\bt + \btheta_0^T\bt - \psi_0\left(\btheta_0\right)},
	\end{split}
\end{equation}
\begin{align}\label{equ:marginals of p0}
	   	p_{0,k}\left(s_k;\btheta_{0,k}\right) &=  \exp\braces{\bd_k^T\bt_k + \btheta_{0,k}^T\bt_k - \psi_0\left(\btheta_{0,k}\right)}\nonumber\\
        &=\exp\braces{\sum_{\ell=1}^{L-1}\left( d_{k,\ell} + \theta_{0,k,\ell}\right)\delta\left(s_{k} - s^{\left(\ell\right)}\right)  } \nonumber\\
        & \ \ \times \exp\braces{- \psi_0\left(\btheta_{0,k}\right)  },
\end{align}
\end{subequations}
where
\begin{equation}\label{equ:EACS of p0}
	\btheta_0 = \left[\btheta_{0,1}^T,\btheta_{0,2}^T,\ldots,\btheta_{0,2K}^T  \right]^T \in \bbR^{2K\left(L-1\right)}
\end{equation}
is the EACS of $p_0\left(\bs;\btheta_0\right)$,
\begin{equation}
	\btheta_{0,k} = \left[ \theta_{0,k,1},\theta_{0,k,2},\ldots,\theta_{0,k,L-1}  \right]^T\in \bbR^{\left(L-1\right)}
\end{equation}
is the EACS of $p_{0,k}\left(s_k;\btheta_{0,k}\right)$,
$p_{0,k}\left(s_k;\btheta_{0,k}\right)$ is the marginal
distribution of $s_k$, 
the free energies $\psi_0\left(\btheta_{0}\right)$ and $\psi_0\left(\btheta_{0,k}\right)$ are given by 
\begin{subequations}\label{equ:psi0}
	\begin{equation}
		\begin{split}
			\psi_0\left(\btheta_0\right) &= \sum_{k=1}^{K}\psi_0\left(\btheta_{0,k}\right)\\
			&=\ln \left( \sum_{\bs \in \mathbb{S}^{2K}}\exp\braces{ \bd^T\bt + \btheta_{0}^T\bt } \right),
		\end{split}
	\end{equation}
\begin{equation}\label{equ:free energy of marginals of p0}
	\begin{split}
	   \psi_0\left(\btheta_{0,k}\right) &= \ln \left( \sum_{s_k \in \mathbb{S}}\exp\braces{ \bd_k^T\bt_k + \btheta_{0,k}^T\bt_k } \right)\\
	   &=\ln\left( 1+ \sum_{\ell = 1}^{L-1}\exp\braces{ d_{k,\ell} + \theta_{0,k,\ell} } \right).
	\end{split}
\end{equation}
\end{subequations}
Given $p_0\left(\bs;\btheta_{0}\right)$ and its marginals $p_{0,k}\left(s_k;\btheta_{0,k}\right)$, the probability of signal $s_k, k\in \setposi{2K}$, can be expressed in a more explicit way as
\begin{subequations}\label{equ:probability of marginals of p0}
	\begin{equation}\label{equ:probability of mariginals of p0 on s^0}
	p_{0,k}\left(s_k;\btheta_{0,k}\right)\Big|_{s_{k} = s^{\left(0\right)}} \equaa \frac{1}{1+ \sum_{\ell = 1}^{L-1}\exp\braces{ d_{k,\ell} + \theta_{0,k,\ell} }},
    \end{equation}
    \begin{equation}\label{equ:probability of mariginals of p0 on s^ell}
	p_{0,k}\left(s_k;\btheta_{0,k}\right)\Big|_{s_{k} = s^{\left(\ell\right)}} \equab \frac{\exp\braces{ d_{k,\ell} + \theta_{0,k,\ell} }}{1+ \sum_{\ell = 1}^{L-1}\exp\braces{ d_{k,\ell} + \theta_{0,k,\ell} }},
    \end{equation}
\end{subequations}
where $\ell \in \setposi{L-1}$ in \eqref{equ:probability of mariginals of p0 on s^ell}, and $\left(\textrm{a}\right)$ and $\left(\textrm{b}\right)$ come from \eqref{equ:marginals of p0} and \eqref{equ:free energy of marginals of p0}.
The probability of $p_0\left(\bs;\btheta_{0}\right)$ can be then expressed more explicitly by using \eqref{equ:defintion of p_0}.
Also, from \eqref{equ:probability of marginals of p0}, we can conversely use the marginal probability of $s_k$ to express the EACS $\btheta_{0,k}$ of $p_{0,k}\left(s_k;\btheta_{0,k}\right), k\in \setposi{2K}$, i.e.,
\begin{equation}\label{equ:relation}
	\theta_{0,k,\ell} = \ln\frac{p_{0,k}\left(s_k;\btheta_{0,k}\right)\Big|_{s_{k} = s^{\left(\ell\right)}}}{p_{0,k}\left(s_k;\btheta_{0,k}\right)\Big|_{s_{k} = s^{\left(0\right)}}} - d_{k,\ell}, \ell\in \setposi{L-1}.
\end{equation}
Then, the EACS $\btheta_{0}$ of $p_0\left(\bs;\btheta_{0}\right)$ can be also obtained.
This relationship will be used later in this section.
$2N_r$ AMs are defined, where the $n$-th of them is given by
\begin{subequations}\label{equ:AMs}
\begin{equation}
	\mathcal{M}_n = \braces{ p_n\left(\bs;\btheta_n\right) \Big| \btheta_n \in \bbR^{2K\left(L-1\right)}  },
\end{equation}
\begin{equation}\label{equ:pn}
	p_n\left(\bs;\btheta_n\right) = \exp\braces{ \bd^T\bt + \btheta_n^T\bt + c_n\left(\bs,y_n\right) - \psi_n\left(\btheta_n\right)  },
\end{equation}
\end{subequations}
where
\begin{equation}\label{equ:EACS of pn}
	\btheta_n = \left[\btheta_{n,1}^T,\btheta_{n,2}^T,\ldots,\btheta_{n,2K}^T  \right]^T \in \bbR^{2K\left(L-1\right)}
\end{equation}
is the EACS of $p_n\left(\bs;\btheta_n\right)$,
\begin{equation}
	\btheta_{n,k} = \left[ \theta_{n,k,1},\theta_{n,k,2},\ldots,\theta_{n,k,L-1}  \right]^T\in \bbR^{\left(L-1\right)},
\end{equation}
and the free energy $\psi_n$ is given by 
\begin{equation}
	\psi_n\left(\btheta_n\right) = \ln \left( \sum_{\bs \in \mathbb{S}^{2K}}\exp\braces{ \bd^T\bt + \btheta_n^T\bt + c_n\left( \bs,y_n \right) } \right).
\end{equation}
From the definitions, it is not difficult to check that the OBM and the AMs are all $e$-flat.

Before proceeding, we further define a manifold called the original manifold (OM), and then show that the OBM and the AMs are its submanifolds.
Define the OM as the set of probability distributions of the $2K$ dimensional discrete random vector $\bs$ as
\begin{equation}
	\mathcal{S} = \braces{p\left(\bs\right)\Big| p\left(\bs\right)>0, \bs\in \mathbb{S}^{2K},  \sum_{\bs\in\mathbb{S}^{2K}}p\left(\bs\right) = 1 }.
\end{equation}
$\mathcal{S}$ is then a $L^{2K}-1$ dimensional  manifold and forms an exponential family.
Then, it can be readily checked that the \textsl{a posteriori} distribution $p\left(\bs|\by
\right)$ belongs to $\mathcal{S}$ since $p\left(\bs|\by\right)$  is a particular probability distribution of $\bs$.
Similarly, it can be obtained that the OBM and the AMs are the submanifolds of the OM, i.e., $\mathcal{M}_0 \subseteq \mathcal{S}$, $\mathcal{M}_n \subseteq \mathcal{S}, n\in\setposi{2N_r}$, since the distributions in the OBM and the AMs are all particular probability distributions of $\bs$ when the EACSs of them are given.

We now present the properties of the $m$-projection of any $p\left(\bs\right) \in \mathcal{S}$, onto the OBM $\mathcal{M}_0$, which inspires us  to approximate the $m$-projection of $p_n\left(\bs;\btheta_{n}\right)$ onto the OBM $\mathcal{M}_0$. 
According to the Section \ref{sec:pre of IG}, given $p\left(\bs\right) \in \mathcal{S}$ and the OBM $\mathcal{M}_0$, which is an $e$-flat submanifold of $\mathcal{S}$,  the $m$-projection of $p\left(\bs\right)$ onto $\mathcal{M}_0$  is obtained by the following minimization problem,
	\begin{equation}\label{equ:mp in detection 1}
		\btheta_{0}^{\star} = \argmin{\btheta_0}\Dkl{p\left(\bs\right)}{p_0\left( \bs;\btheta_0 \right)},
	\end{equation}
where the K-L divergence is given by
\begin{align}\label{equ:K-L divergence in detection}
	&\ \Dkl{p\left(\bs\right)}{p_0\left( \bs;\btheta_0 \right)}
	= \Exp_{p\left(\bs\right)} \braces{  \ln \frac{p\left(\bs\right)}{p_0\left(\bs;\btheta_0\right)}   }\nonumber \\
	=&\ C_{p} - \sum_{\bs \in \mathbb{S}^{2K}}p\left(\bs\right)\ln \left( p_0\left(\bs;\btheta_0\right) \right),
\end{align}
where $C_{p} = \sum_{\bs \in \mathbb{S}^{2K}}p\left(\bs\right)\ln p\left(\bs\right)$ is a constant independent of $\btheta_0$.
We then have the following theorem.
\begin{theorem}\label{lemma:equivalent of mp1}
	Given $p\left( \bs \right) \in \mathcal{S} $, and the $e$-flat $\mathcal{M}_0 \subseteq \mathcal{S}$,
	the $m$-projection of $p\left( \bs \right)$ onto $\mathcal{M}_0$ is unique.
	Moreover, $p_0\left(\bs;\btheta_{0}^{\star}\right)$ is the $m$-projection of $p\left( \bs \right)$ onto $\mathcal{M}_0$ if and only if  the following relationship holds,
	\begin{equation}\label{equ:mp invariant1}
	   \bbeta = \bbeta_0\left(\btheta_{0}^{\star}\right) ,
	\end{equation}
where $\bbeta, \bbeta_0\left(\btheta_{0}^{\star}\right) \in \bbR^{2K\left( L-1 \right)} $ are the expectations of $\bt$ w.r.t. $p\left(\bs\right)$ and $p_0\left(\bs;\btheta_{0}^{\star}\right)$, respectively, i.e.,
\begin{subequations}
	\begin{equation}
		\bbeta = \Exp_{p\left(\bs\right)}\braces{\bt} = \sum_{\bs \in \mathbb{S}^{2K}}\bt p\left(\bs\right),
	\end{equation}
\begin{equation}
	\bbeta_0\left(\btheta_{0}^{\star}\right) = \Exp_{p_0\left(\bs;\btheta_{0}^{\star}\right)}\braces{\bt} = \sum_{\bs \in \mathbb{S}^{2K}}\bt p_0\left(\bs;\btheta_{0}^{\star}\right).
\end{equation}
\end{subequations}
\end{theorem}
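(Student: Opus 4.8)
The plan is to use that $\mathcal{M}_0$ is an exponential family, so that the $m$-projection \eqref{equ:mp in detection 1} becomes a smooth convex program in the EACS $\btheta_0$ whose stationarity condition is precisely the moment-matching identity \eqref{equ:mp invariant1}. First I would reduce the objective: substituting $\ln p_0\left(\bs;\btheta_0\right)=\left(\bd+\btheta_0\right)^T\bt-\psi_0\left(\btheta_0\right)$ from \eqref{equ:defintion of p_0} into \eqref{equ:K-L divergence in detection} and using $\sum_{\bs}\bt\,p\left(\bs\right)=\bbeta$ gives
\begin{equation}
\Dkl{p\left(\bs\right)}{p_0\left(\bs;\btheta_0\right)}=C_p-\bd^T\bbeta+F\left(\btheta_0\right),\qquad F\left(\btheta_0\right)=\psi_0\left(\btheta_0\right)-\btheta_0^T\bbeta ,
\end{equation}
so that \eqref{equ:mp in detection 1} is equivalent to minimizing $F$ over $\btheta_0\in\bbR^{2K\left(L-1\right)}$, the additive constant $C_p-\bd^T\bbeta$ being irrelevant.

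Next I would carry out the standard exponential-family calculus. Since $\psi_0\left(\btheta_0\right)=\ln\sum_{\bs\in\mathbb{S}^{2K}}\exp\braces{\left(\bd+\btheta_0\right)^T\bt}$ is a log-partition function in $\bd+\btheta_0$, differentiating under the finite sum yields $\nabla\psi_0\left(\btheta_0\right)=\Exp_{p_0\left(\bs;\btheta_0\right)}\braces{\bt}=\bbeta_0\left(\btheta_0\right)$ and $\nabla^2\psi_0\left(\btheta_0\right)=\mathrm{Cov}_{p_0\left(\bs;\btheta_0\right)}\braces{\bt}$, hence $\nabla F\left(\btheta_0\right)=\bbeta_0\left(\btheta_0\right)-\bbeta$ and $\nabla^2F\left(\btheta_0\right)=\mathrm{Cov}_{p_0\left(\bs;\btheta_0\right)}\braces{\bt}$. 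The crucial step is to show this Hessian is positive definite for every $\btheta_0$, which yields strict convexity of $F$. Because $p_0\left(\bs;\btheta_0\right)=\prod_{k=1}^{2K}p_{0,k}\left(s_k;\btheta_{0,k}\right)$ factorizes, the coordinates are independent under $p_0$, so $\mathrm{Cov}_{p_0}\braces{\bt}$ is block diagonal with $k$-th block $\mathrm{Cov}_{p_{0,k}}\braces{\bt_k}$, $\bt_k=\left[\delta\left(s_k-s^{\left(1\right)}\right),\ldots,\delta\left(s_k-s^{\left(L-1\right)}\right)\right]^T$. For a single variable taking each value of $\mathbb{S}$ with strictly positive probability — which holds by \eqref{equ:probability of marginals of p0} — this block equals $\Diag{\bq_k}-\bq_k\bq_k^T$ with $\bq_k$ the probability vector of $s^{\left(1\right)},\ldots,s^{\left(L-1\right)}$, and this is positive definite whenever all entries of $\bq_k$ and $1-\bone^T\bq_k$ are positive (the non-degeneracy of the multinomial with the baseline category removed). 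Thus $\nabla^2F\succ0$ everywhere.

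Finally, strict convexity already forces at most one minimizer. For existence I would note that $p\left(\bs\right)>0$ on $\mathbb{S}^{2K}$ (by the definition of $\mathcal{S}$) places $\bbeta$ in the interior of the convex hull of $\braces{\bt\left(\bs\right):\bs\in\mathbb{S}^{2K}}$, and that the moment map $\btheta_0\mapsto\bbeta_0\left(\btheta_0\right)$ of the regular, minimal exponential family $\mathcal{M}_0$ is a diffeomorphism onto that interior (a standard fact, whose regularity hypothesis is exactly the positive definiteness established above); hence there is a unique $\btheta_0^{\star}$ with $\bbeta_0\left(\btheta_0^{\star}\right)=\bbeta$, i.e.\ $\nabla F\left(\btheta_0^{\star}\right)=0$, which by convexity is the global minimizer. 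Therefore the $m$-projection exists, is unique, equals $p_0\left(\bs;\btheta_0^{\star}\right)$, and for an arbitrary $\btheta_0$ the distribution $p_0\left(\bs;\btheta_0\right)$ is that $m$-projection if and only if $\nabla F\left(\btheta_0\right)=0$, i.e.\ $\bbeta_0\left(\btheta_0\right)=\bbeta$, which is \eqref{equ:mp invariant1}.

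The main obstacle is the positive-definiteness claim in the second paragraph — equivalently, that the indicator statistics $\braces{\delta\left(s_k-s^{\left(\ell\right)}\right)}$ together with the constant $1$ are linearly independent as functions on $\mathbb{S}^{2K}$, so that the representation of $\mathcal{M}_0$ is minimal and $\psi_0$ is strictly convex; this is what simultaneously delivers uniqueness of the $m$-projection and the ``only if'' direction. Everything else is routine convex analysis and exponential-family differentiation; if one prefers to avoid the explicit Hessian, one can instead invoke the bijectivity of the moment map directly, but establishing its regularity amounts to the same non-degeneracy statement.
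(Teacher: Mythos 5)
Your proof is correct and follows essentially the same route as the paper's: both reduce the $m$-projection to minimizing a strictly convex function of $\btheta_0$ whose gradient is $\bbeta_0\left(\btheta_0\right)-\bbeta$ and whose Hessian is the block-diagonal Fisher information matrix, each block being the positive-definite multinomial covariance $\Diag{\bq_k}-\bq_k\bq_k^T$. If anything, you are slightly more careful than the paper, which infers existence of the minimizer from strict convexity alone, whereas you ground existence in the fact that $\bbeta$ lies in the interior of the convex hull of the sufficient-statistic values and the moment map of the regular minimal family $\mathcal{M}_0$ is a diffeomorphism onto that interior.
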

\begin{proof}
	See Appendix \ref{proof:equivalent of mp1}.
\end{proof}

Define $2K$ discrete random vectors of $2K-1$ dimensions, where the $k$-th of them, denoted as $\bs_{\setminus k}$, is obtained by  removing the $k$-th element, i.e., $s_k$, of $\bs$, $ k\in \setposi{2K}$.
Then, we can obtain $\bs_{\setminus k}\in \mathbb{S}^{2K-1}, k\in \setposi{2K}$, and the marginal probability distribution of $s_k$ given the joint probability distribution $p\left(\bs\right)$  is
	\begin{equation}\label{equ:marginals of p(s)}
		\begin{split}
		p_k\left(s_k\right) &\triangleq \sum_{\bs_{\setminus k} \in \mathbb{S}^{2K-1}}p\left(\bs\right)\\
		&=\sum_{s_1\in \mathbb{S}}\cdots \sum_{s_{k-1}\in \mathbb{S}}\sum_{s_{k+1}\in \mathbb{S}}\cdots\sum_{s_{2K}\in \mathbb{S}} p\left(\bs\right),  k\in \setposi{2K}.
		\end{split}
\end{equation}
From the definition of $p_0\left(\bs;\btheta_{0}\right)$ in \eqref{equ:defintion of p_0}, we denote the marginals of $p_0\left(\bs;\btheta_{0}^{\star}\right)$ in Theorem \ref{lemma:equivalent of mp1} as $p_{0,k}\left(s_k;\btheta_{0,k}^{\star}\right), k\in \setposi{2K}$, where
%\begin{equation}
%	\btheta_{0,k}^{\star} = \left[ \theta_{0,k,1}^{\star},\theta_{0,k,2}^{\star},\ldots,\theta_{0,k,L-1}^{\star}  \right]^T\in \bbR^{\left(L-1\right)},
%\end{equation}  
$	\btheta_{0,k}^{\star} = \left[ \theta_{0,k,1}^{\star},\theta_{0,k,2}^{\star},\ldots,\theta_{0,k,L-1}^{\star}  \right]^T\in \bbR^{\left(L-1\right)}$ and
$\btheta_{0}^{\star} = \left[ \left( \btheta_{0,1}^{\star} \right)^T, \left( \btheta_{0,2}^{\star} \right)^T, \ldots, \left( \btheta_{0,2K}^{\star} \right)^T \right]^T$.
Combining Theorem \ref{lemma:equivalent of mp1}, we have the following corollary.
\begin{corol}\label{the:equivalent of mp2}
	Given $p\left( \bs \right) \in \mathcal{S} $, and the $e$-flat $\mathcal{M}_0 \subseteq \mathcal{S}$, $p_0\left(\bs;\btheta_{0}^{\star}\right)$ is the $m$-projection of $p\left( \bs \right)$ onto $\mathcal{M}_0$ if and only if the marginals of $p\left(\bs\right)$ and the marginals of $p_0\left(\bs;\btheta_{0}^{\star}\right)$ are equal, i.e.,
	\begin{equation}\label{equ:mp3}
		p_k\left(s_k\right) = p_{0,k}\left(s_k;\btheta_{0,k}^{\star} \right), s_k\in \mathbb{S}, k\in \setposi{2K}.
	\end{equation}
Meanwhile, the EACS of the $m$-projection is given by $\btheta_{0}^{\star} = \left[ \left( \btheta_{0,1}^{\star} \right)^T, \left( \btheta_{0,2}^{\star} \right)^T, \ldots, \left( \btheta_{0,2K}^{\star} \right)^T \right]^T$, where  $\btheta_{0,k}^{\star} = \left[ \theta_{0,k,1}^{\star},\theta_{0,k,2}^{\star},\ldots,\theta_{0,k,L-1}^{\star}  \right]^T, k\in \setposi{2K}$, and
	\begin{equation}\label{equ:mp solution 0}
		\theta_{0,k,\ell}^{\star} = \ln\frac{p_k\left(s_k\right)\big|_{s_k = s^{\left(\ell\right)}}}{p_k\left(s_k\right)\big|_{s_k = s^{\left(0\right)}}} - d_{k,\ell}, \ell\in \setposi{L-1}.
	\end{equation}
\end{corol}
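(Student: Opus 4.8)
The plan is to deduce the corollary from Theorem~\ref{lemma:equivalent of mp1} by unwinding what the moment-matching condition \eqref{equ:mp invariant1} means for the particular sufficient statistic used here. First I would recall that each block $\bt_k = \left[t_{k,1},\ldots,t_{k,L-1}\right]^T$ of $\bt$ has entries $t_{k,\ell} = \delta\left(s_k - s^{\left(\ell\right)}\right)$, so that for \emph{any} probability distribution $q\left(\bs\right)$ of $\bs$ and any $\ell \in \setposi{L-1}$ one has $\Exp_{q\left(\bs\right)}\braces{t_{k,\ell}} = \sum_{\bs \in \mathbb{S}^{2K}}\delta\left(s_k-s^{\left(\ell\right)}\right)q\left(\bs\right) = q_k\left(s_k\right)\big|_{s_k = s^{\left(\ell\right)}}$, i.e., the $\ell$-th value of the marginal distribution of $s_k$. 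Specializing this to $q = p$ and to $q = p_0\left(\cdot\,;\btheta_{0}^{\star}\right)$ shows that \eqref{equ:mp invariant1} is exactly the finite system of scalar identities $p_k\left(s_k\right)\big|_{s_k = s^{\left(\ell\right)}} = p_{0,k}\left(s_k;\btheta_{0,k}^{\star}\right)\big|_{s_k = s^{\left(\ell\right)}}$ for all $k \in \setposi{2K}$ and all $\ell \in \setposi{L-1}$.

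Next I would bridge the gap between the index range $\ell \in \setposi{L-1}$ appearing above and the statement $s_k \in \mathbb{S}$ of the corollary, that is, recover the case $\ell = 0$. Since both $p_k\left(\cdot\right)$ and $p_{0,k}\left(\cdot\,;\btheta_{0,k}^{\star}\right)$ are probability distributions on the finite alphabet $\mathbb{S}$, their values over $\ell \in \setnnega{L-1}$ each sum to one, so agreement on $\ell = 1,\ldots,L-1$ automatically forces agreement on $\ell = 0$ as well; conversely agreement for all $s_k \in \mathbb{S}$ trivially implies agreement for $\ell \in \setposi{L-1}$. Hence \eqref{equ:mp invariant1} holds if and only if \eqref{equ:mp3} holds. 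Combining this with Theorem~\ref{lemma:equivalent of mp1}, which asserts both that the $m$-projection onto $\mathcal{M}_0$ is unique and that $p_0\left(\bs;\btheta_{0}^{\star}\right)$ is that $m$-projection exactly when \eqref{equ:mp invariant1} holds, yields the uniqueness claim together with the ``if and only if'' characterization \eqref{equ:mp3} of the corollary.

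It then remains to establish the explicit formula \eqref{equ:mp solution 0} for the $e$-affine coordinate system of the $m$-projection. For this I would invoke the relation \eqref{equ:relation}, which already expresses the EACS of any marginal $p_{0,k}\left(s_k;\btheta_{0,k}\right)$ of a distribution in $\mathcal{M}_0$ in terms of that marginal's own probabilities, namely $\theta_{0,k,\ell} = \ln\frac{p_{0,k}\left(s_k;\btheta_{0,k}\right)\big|_{s_k = s^{\left(\ell\right)}}}{p_{0,k}\left(s_k;\btheta_{0,k}\right)\big|_{s_k = s^{\left(0\right)}}} - d_{k,\ell}$. Substituting the marginal identity \eqref{equ:mp3} just established (with $\btheta_{0,k} = \btheta_{0,k}^{\star}$) gives precisely \eqref{equ:mp solution 0}; here $p_k\left(s_k\right)\big|_{s_k = s^{\left(0\right)}} > 0$ because $p\left(\bs\right) > 0$ on $\mathbb{S}^{2K}$, so the logarithm is well defined, and $\btheta_{0}^{\star}$ is then assembled block-wise as stated. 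I do not anticipate a serious obstacle: the entire argument is a translation of Theorem~\ref{lemma:equivalent of mp1} into marginal-probability language, and the only point requiring care is that it is the fully factorized structure of $\mathcal{M}_0$ in \eqref{equ:defintion of p_0} — each component $s_k$ carrying its own block of the sufficient statistic with no cross-terms — that makes the moment-matching condition decouple into the componentwise marginal identities \eqref{equ:mp3}; this is the step I would spell out most carefully.
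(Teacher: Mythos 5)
Your proposal is correct and follows essentially the same route as the paper's proof in Appendix B: identify the components of the expectation parameters $\bbeta$ and $\bbeta_0\left(\btheta_{0}^{\star}\right)$ as the marginal probabilities $p_k\left(s_k\right)\big|_{s_k=s^{\left(\ell\right)}}$ and $p_{0,k}\left(s_k;\btheta_{0,k}^{\star}\right)\big|_{s_k=s^{\left(\ell\right)}}$, invoke Theorem~\ref{lemma:equivalent of mp1} for the equivalence, and read off \eqref{equ:mp solution 0} from \eqref{equ:relation}. Your explicit handling of the $\ell=0$ case via normalization is a small point the paper leaves implicit, but it does not change the argument.
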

\begin{proof}
	See  Appendix \ref{proof:equivalent of mp2}.
\end{proof}

Given $p_n\left(\bs;\btheta_{n}\right) \in \mathcal{M}_n, n\in \setposi{2N_r}$, from Theorem \ref{lemma:equivalent of mp1} we can obtain that its $m$-projection  onto $\mathcal{M}_0$ is unique since $p_n\left(\bs;\btheta_{n}\right)\in \mathcal{S}$.
Denote the $m$-projection of $p_n\left(\bs;\btheta_{n}\right)$ onto $\mathcal{M}_0$ as $p_0\left(\bs;\btheta_{0n}\right), n\in \setposi{2N_r}$, where $\btheta_{0n} = \left[ \btheta_{0n,1}^T,\btheta^T_{0n,2},\ldots,\btheta^T_{0n,2K} \right]^T \in \bbR^{2K\left(L-1\right)}$ and $\btheta_{0n,k}= \left[ \theta_{0n,k,1}, \theta_{0n,k,2}, \ldots, \theta_{0n,k,L-1} \right]^T \in \bbR^{\left(L-1\right)}, k\in \setposi{2K}$.
From Corollary \ref{the:equivalent of mp2}, for any $n\in \setposi{2N_r}$, the $m$-projection  $p_0\left(\bs;\btheta_{0n}\right)$ is determined by the marginal probability distribution $p_{n,k}\left(s_k;\btheta_{n}\right), k \in \setposi{2K}$, where
\begin{equation}
	p_{n,k}\left(s_k;\btheta_{n}\right) \triangleq \sum_{\bs_{\setminus k} \in \mathbb{S}^{2K-1}}p_n\left(\bs;\btheta_n\right).
\end{equation}
And we have 
\begin{equation}\label{equ:mp solution}
	\theta_{0n,k,\ell} = \ln\frac{p_{n,k}\left(s_k;\btheta_{n}\right)\big|_{s_k = s^{\left(\ell\right)}}}{p_{n,k}\left(s_k;\btheta_{n}\right)\big|_{s_k = s^{\left(0\right)}}} - d_{k,\ell},
\end{equation}
where $n\in \setposi{2N_r}, k\in \setposi{2K}$ and $\ell \in \setposi{L-1}$.
Nevertheless, it is relatively difficult to obtain the closed-form solution of the marginal probability distribution $p_{n,k}\left(s_k;\btheta_{n}\right)$ since the calculation is of exponential-complexity.
In this work, we solve this problem by calculating the approximations of the marginals $p_{n,k}\left(s_k;\btheta_{n}\right), k\in \setposi{2K}, n\in \setposi{2N_r}$, using the central-limit-theorem (CLT).

From the definition of $p_n\left(\bs;\btheta_{n}\right)$ in \eqref{equ:pn}, its marginals can be expressed as
\begin{align}\label{equ:marginals of pn}
		p_{n,k}\left(s_k;\btheta_n\right) &= \sum_{\bs_{\setminus k} \in \mathbb{S}^{2K-1}}\exp\braces{ \left(\bd +\btheta_n  \right)^T \bt + c_n\left(\bs,y_n\right) - \psi_n }\nonumber\\
		&\overset{\left(\textrm{a}\right) }{\propto}\exp\braces{\left( \bd_k+\btheta_{n,k} \right)^T\bt_k }q\left( y_n,s_k \right),
\end{align}
where $n \in \setposi{2N_r}$, 
$k\in \setposi{2K}$, 
$s_k \in \mathbb{S}$,
$\left(\textrm{a}\right)$ is obtained by removing the constants that do not vary with the value of $s_k$,
$q\left(y_n,s_k\right)$ is a function of $y_n$ and $s_k$,
and
\begin{align}\label{equ:q(sk,yn)}
	   &q\left(y_n,s_k\right)\\
	    = &\sum_{\bs_{\setminus k} \in \mathbb{S}^{2K-1}}\exp\Big\{ \sum_{\substack{k'=1, k'\neq k}}^{2K}\left(\bd_{k'}+ \btheta_{n,k'}  \right)^T\bt_{k'} + c_n\left( \bs,y_n \right)  \Big\}.	\nonumber
\end{align}
Note that the proportions in the second line of \eqref{equ:marginals of pn} and  the third line of \eqref{equ:q(sk,yn)2} next will not affect the calculation of $p_{n,k}\left(s_k;\btheta_{n}\right)$ since
the constants corresponding to these proportions do not vary with the value of $s_k$, and thus we can finally normalize $p_{n,k}\left(s_k;\btheta_{n}\right)$.
We will not repeat this property when a similar situation arises in the rest of this paper. 
In the last line of \eqref{equ:marginals of pn}, the calculation of $\exp\braces{\left(\bd_k + \btheta_{n,k}\right)^T\bt_k}$ is simple, 
if we can obtain the approximate value of $q\left(s_k,y_n\right), s_k\in \mathbb{S}$, we then can obtain the approximate value of $p_{n,k}\left(s_k;\btheta_{n}\right), s_k\in \mathbb{S}$.
Hence, our goal now is converted to obtain the approximate value of  $q\left(y_n,s_k\right), s_k\in \mathbb{S}$.
From \eqref{equ:q(sk,yn)}, we can obtain
\begin{align}\label{equ:q(sk,yn)2}
		& \ \ \ \ q\left(y_n,s_k\right)\nonumber \\
		&= \sum_{\bs_{\setminus k} \in \mathbb{S}^{2K-1}}\Big( \prod_{k'=1,k'\neq k}^{2K}\exp\braces{ \left( \bd_{k'}+ \btheta_{n,k'} \right)^T \bt_{k'} }\nonumber\\
		 & \ \ \times\exp\braces{ -\frac{1}{2\sigma_z^2}\left( y_n - \be_n^T\bG\bs \right)^2 }  \Big)\\
		 &\overset{\left(\textrm{a}\right)}{\propto} \sum_{\bs_{\setminus k} \in \mathbb{S}^{2K-1}} \Big( \prod_{k'=1, k'\neq k}^{2K}p_{0,k'}\left( s_{k'};\btheta_{n,k'} \right) f_{\textrm{G}}\left( y_n;\be_n^T\bG\bs,\sigma_z^2 \right) \Big),\nonumber
\end{align}
where 
$\bG$ is defined in \eqref{equ:definition of G},
$\left(\textrm{a}\right)$ is obtained by adding the constant independent with $s_k$ and $y_n$,
$p_{0,k'}\left(s_{k'};\btheta_{n,k'}\right)$ is defined by \eqref{equ:marginals of p0}, 
and $f_{\textrm{G}}\left( x;\mu,\sigma^2 \right)$ denotes the PDF of a real Gaussian distribution $\mathcal{N}\left( \mu,\sigma^2 \right)$ for a real random variable $x$.
Inspired by the last line of \eqref{equ:q(sk,yn)2}, we consider $2N_r\times 2K$ hybrid random variables $Y_{n,k}, n\in \setposi{2N_r}, k \in \setposi{2K}$, where the $\left(n,k\right)$-th of them is defined by: for a given $s_k$, 
\begin{equation}\label{equ:Yn}
	\begin{split}
		Y_{n,k} &= \be_n^T\bG\bs + w
		= g_{n,k}s_k + \sum_{k'=1, k'\neq k}^{2K}g_{n,k'}s_{k'} + w\\
		&= \sum_{k'=1, k'\neq k}^{2K}g_{n,k'}s_{k'} + w'_{n,k},
	\end{split}
\end{equation}
where $s_k$ is considered as a determinate (also known/given) constant,
$g_{n,k}$ is the $\left(n,k\right)$-th component of $\bG$,
$g_{n,k}$ is also considered as a determinate and known constant,
$\braces{s_{k'}}_{k'\neq k}$ are considered as the independent discrete random variables,
the probability distribution of $s_{k'}, k'\neq k$, is given by $p_{0,k'}\left(s_{k'};\btheta_{n,k'}\right)$, 
the joint probability distribution of  $\braces{s_{k'}}_{k'\neq k}$ is then given by $p\left( \bs_{\setminus k} \right) = \prod_{k'\neq k}p_{0,k'}\left(s_{k'};\btheta_{n,k'} \right)$, 
$w \sim \mathcal{N}\left(0,\sigma_z^2\right)$ is a real Gaussian random variable independent with $\braces{s_{k'}}_{k'\neq k}$,
and $w'_{n,k} = w + g_{n,k}s_k \sim \mathcal{N}\left( g_{n,k}s_k,\sigma_z^2 \right)$ is also independent with $\braces{s_{k'}}_{k'\neq k}$.
Briefly, for $Y_{n,k}$   the subscript $n$ determines which row of $\bG$ is multiplied by $\bs$,
and the subscript $k$ determines which component of $\bs$ is considered deterministic.
In this case, it is not difficult to obtain that the PDF of $Y_{n,k}$ is given by \cite[Sec. 6.1.2]{pishro2016introduction} 
\begin{align}\label{equ:PDF of yn}
	   	&f\left(Y_{n,k}\right)\nonumber\\
	   	 = & \sum_{\bs_{\setminus k} \in \mathbb{S}^{2K-1}}\left(\! p\left(\bs_{\setminus k}\right)f_{\textrm{G}}\left(\!Y_{n,k}\!-\!\sum_{k'\neq k}g_{n,k'}s_{k'} ;g_{n,k}s_k,\sigma_z^2  \!\right) \!\right) \nonumber\\
	   	=& \sum_{\bs_{\setminus k} \in \mathbb{S}^{2K-1}}\left(p\left(\bs_{\setminus k}\right)f_{\textrm{G}}\left( Y_{n,k};\be_n^T\bG\bs,\sigma_z^2 \right)  \right),  	
\end{align}
which will be equal to the last line of \eqref{equ:q(sk,yn)2} after we set the value of $Y_{n,k}$ as  $Y_{n,k} = y_n$.
Since although the terms in the summation in \eqref{equ:Yn} are independent each other, they do not have the same distribution. Thus, the conventional CLT may not apply directly. 
We next apply Lyapunov CLT to impose a condition on the values of $g_{n,k}$ in matrix $\bG$ and the variances of the random variables in \eqref{equ:Yn} so that $Y_{n,k}$ converges in distribution to a real Gaussian random variable.
To do so, let us first see Lyapunov CLT. 
\begin{lemma}[Lyapunov central-limit-theorem \cite{billingsley2008probability}]\label{lemma:CLT}
	Suppose $\braces{X_n}_{n=1}^N$ are  independent real random variables, each with finite expected value $\mu_n$ and variance $\sigma_n^2$.
	Denote the random variable $S = \sum_{n=1}^{N}X_n$ and its expected value and variance as $\tilde{\mu} = \sum_{n=1}^{N}\mu_n$ and $\tilde{\sigma}^2 = \sum_{n=1}^{N}\sigma_n^2$, respectively.
	Suppose for some positive $\delta$, Lyapunov's condition 
	\begin{equation}\label{equ:Lyapunov condition}
		\lim\limits_{N\to\infty}\frac{1}{\tilde{\sigma}^{2+\delta}}\sum_{n=1}^N\Exp\braces{ \left| X_n - \mu_n \right|^{2+\delta} } = 0
	\end{equation} 
holds.
Then,  $S$ converges in distribution to a real Gaussian random variable $\tilde{S}$, as $N$ tends to infinity, and 
\begin{equation}
	S\overset{d}{\to}\tilde{S}  \sim \mathcal{N}\left(\tilde{\mu},\tilde{\sigma}^2\right).
\end{equation}
\end{lemma}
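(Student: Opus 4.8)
The plan is to establish the statement by the classical characteristic-function argument together with L\'evy's continuity theorem; Lyapunov's condition \eqref{equ:Lyapunov condition} will enter precisely through a Taylor-remainder bound. First I would standardize: set $\zeta_n \triangleq (X_n - \mu_n)/\tilde{\sigma}$ and $Z_N \triangleq \sum_{n=1}^N \zeta_n = (S-\tilde{\mu})/\tilde{\sigma}$, so that $\Exp\braces{\zeta_n} = 0$, $\sum_{n=1}^N \var{\zeta_n} = 1$, and it suffices to prove $Z_N \overset{d}{\to} \mathcal{N}\left(0,1\right)$; the claim for $S$ then follows from the affine change of variables $S = \tilde{\sigma} Z_N + \tilde{\mu}$. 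By independence, the characteristic function factorizes as $\varphi_{Z_N}(u) = \prod_{n=1}^N \varphi_{\zeta_n}(u)$.

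Second, I would expand each factor. Using the elementary estimate $\abs{e^{\mathrm{i}x} - 1 - \mathrm{i}x + x^2/2} \le C_\delta \abs{x}^{2+\delta}$ for $0 < \delta \le 1$, obtained by interpolating the two standard remainder bounds $\abs{x}^2$ and $\abs{x}^3/6$, one gets $\varphi_{\zeta_n}(u) = 1 - \tfrac{u^2}{2}\var{\zeta_n} + \rho_n(u)$ with $\abs{\rho_n(u)} \le C_\delta \abs{u}^{2+\delta}\,\Exp\braces{\abs{\zeta_n}^{2+\delta}}$. Summing over $n$ and unwinding the normalization in $\zeta_n$ gives $\sum_{n=1}^N \abs{\rho_n(u)} \le C_\delta \abs{u}^{2+\delta}\,\tilde{\sigma}^{-(2+\delta)}\sum_{n=1}^N \Exp\braces{\abs{X_n-\mu_n}^{2+\delta}} \to 0$ by \eqref{equ:Lyapunov condition}. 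The same input yields uniform asymptotic negligibility, $\max_n \var{\zeta_n} \le \big(\sum_{n=1}^N \Exp\braces{\abs{\zeta_n}^{2+\delta}}\big)^{2/(2+\delta)} \to 0$ by Jensen's inequality, so each factor $\varphi_{\zeta_n}(u)$ lies within $o(1)$ of $1$, uniformly in $n$.

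Third, I would pass from the product to the exponential. For $N$ large every factor lies in a fixed neighbourhood of $1$ by the negligibility bound, so we may take the principal logarithm and use $\log(1+w) = w + O(\abs{w}^2)$ to obtain $\log\varphi_{Z_N}(u) = -\tfrac{u^2}{2}\sum_{n} \var{\zeta_n} + \sum_{n} \rho_n(u) + \sum_{n} O\big((\tfrac{u^2}{2}\var{\zeta_n} + \abs{\rho_n(u)})^2\big)$. The first term equals $-u^2/2$ exactly, the second vanishes by Step~2, and the error terms are dominated by a constant times $\max_n \var{\zeta_n} + \big(\max_n\abs{\rho_n(u)}\big)\sum_n \abs{\rho_n(u)}$ and hence also vanish. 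Therefore $\varphi_{Z_N}(u) \to e^{-u^2/2}$ for every $u \in \bbR$, and L\'evy's continuity theorem gives $Z_N \overset{d}{\to} \mathcal{N}\left(0,1\right)$, equivalently $S \overset{d}{\to} \tilde{S} \sim \mathcal{N}\left(\tilde{\mu},\tilde{\sigma}^2\right)$.

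The step I expect to be the main obstacle is the bookkeeping across Steps~2--3: one must keep the Taylor remainder uniform in $n$ and normalized by exactly $\tilde{\sigma}^{-(2+\delta)}$ so that \eqref{equ:Lyapunov condition} can be invoked verbatim, and one must justify the logarithm / product-to-sum passage through uniform negligibility of the summands rather than any moment assumption on an individual term. An alternative that sidesteps part of this is to first show that Lyapunov's condition implies Lindeberg's condition, via $\sum_n \Exp\braces{\zeta_n^2\,\mathbf{1}_{\{\abs{\zeta_n}>\epsilon\}}} \le \epsilon^{-\delta}\sum_n \Exp\braces{\abs{\zeta_n}^{2+\delta}} \to 0$, and then invoke the Lindeberg CLT; since the result is classical, one may instead simply defer to \cite{billingsley2008probability}, as the statement already does.
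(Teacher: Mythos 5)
Your proof is correct, but note that the paper does not actually prove this lemma: it is stated as a known result and attributed to \cite{billingsley2008probability}, so there is no in-paper argument to compare against. Your characteristic-function route is the standard textbook proof and the bookkeeping you flag as the main obstacle is handled correctly: the Taylor remainder bound, the normalization by $\tilde{\sigma}^{-(2+\delta)}$ that lets Lyapunov's condition be invoked verbatim, the uniform negligibility $\max_n \var{\zeta_n}\to 0$ via the Lyapunov moment inequality, and the product-to-logarithm passage all go through. Two small remarks. First, the interpolation bound $\abs{e^{\mathrm{i}x}-1-\mathrm{i}x+x^2/2}\le C_\delta\abs{x}^{2+\delta}$ obtained from $\min\left(\abs{x}^3/6,\,x^2\right)$ only holds for $0<\delta\le 1$, whereas the lemma allows any $\delta>0$; your closing observation that Lyapunov implies Lindeberg, $\sum_n \Exp\braces{\zeta_n^2\,\mathbf{1}_{\{\abs{\zeta_n}>\epsilon\}}}\le \epsilon^{-\delta}\sum_n\Exp\braces{\abs{\zeta_n}^{2+\delta}}\to 0$, closes that gap for general $\delta$ (and is also the cleanest route overall); alternatively one notes that validity for some $\delta$ can be reduced to $\delta\le 1$. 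Second, you correctly repair an imprecision in the lemma as stated: since $\tilde{\mu}$ and $\tilde{\sigma}^2$ depend on $N$, the convergence in distribution is only meaningful for the standardized sum $(S-\tilde{\mu})/\tilde{\sigma}\to\mathcal{N}(0,1)$, which is exactly what you prove before unwinding the affine change of variables. In the paper's actual application (Appendix C) only $\delta=1$ is used, so either of your two routes suffices there.
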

Given the probability distribution $p_{0,k'}\left( s_{k'};\btheta_{n,k'} \right)$ of $s_{k'}, k' \in \setposi{2K}\setminus\braces{k}$, in \eqref{equ:Yn}, by using \eqref{equ:probability of marginals of p0} the expected value and the variance of $s_{k'}$ are given by
\begin{subequations}\label{equ:mean and variance of marginals of p0}
	\begin{equation}\label{equ:mu_{n,k}}
		\begin{split}
		\mu_{n,k'} &= \sum_{s_{k'} \in \mathbb{S}}s_{k'}p_{0,k'}\left( s_{k'};\btheta_{n,k'} \right)\\
		&=\frac{ s^{\left(0\right)} + \sum_{\ell=1}^{L-1}s^{\left(\ell\right)} \exp\braces{d_{k',\ell} + \theta_{n,k',\ell}} }{1+\sum_{\ell=1}^{L-1}\exp\braces{ d_{k',\ell} + \theta_{n,k',\ell} }  },
		\end{split}
	\end{equation}
\begin{equation}
	\begin{split}
	   v_{n,k'} &= \sum_{s_{k'} \in \mathbb{S}} s_{k'}^2 p_{0,k'}\left( s_{k'};\btheta_{n,k'} \right) - \mu_{n,k'}^2\\
	   &=\! \frac{ \left(s^{\left(0\right)}\right)^2 + \sum_{\ell=1}^{L-1}\left(s^{\left(\ell\right)}\right)^2 \exp\braces{d_{k',\ell} + \theta_{n,k',\ell}} }{1+\sum_{\ell=1}^{L-1}\exp\braces{ d_{k',\ell} + \theta_{n,k',\ell} }  } \!-\!  \mu_{n,k'}^2.
	\end{split}
\end{equation}
\end{subequations}
Meanwhile, since $\braces{s_{k'}}_{k'\neq k}$ and $w_{n,k}'$ are independent in \eqref{equ:Yn}, the expected value and variance of $Y_{n,k},  n\in \setposi{2N_r}, k\in\setposi{2K}$,  can be readily expressed as
\begin{subequations}
	\begin{equation}
		\Exp\braces{Y_{n,k}} = \sum_{k' = 1, k'\neq k}^{2K}g_{n,k'}\mu_{n,k'} + g_{n,k}s_k,
	\end{equation}
\begin{equation}\label{equ:variance of Y_{n,k}}
	\mathbb{V}\braces{Y_{n,k}} = \sum_{k'=1, k'\neq k}^{2K}g_{n,k'}^2v_{n,k'} + \sigma_z^2,
\end{equation}
\end{subequations}
We then have the following theorem.
\begin{theorem}\label{the:CLT of Y_{n,k}}
	If the following condition
	\begin{equation}\label{equ:condition on v_{n,k}}
		\lim\limits_{K\to \infty}\frac{1}{2K}\sum_{k'=1, k'\neq k}^{2K}g_{n,k'}^2v_{n,k'} = \zeta>0
	\end{equation}
    holds for a positive constant $\zeta$, 
	then $Y_{n,k}$ converges in distribution to a real Gaussian random variable $\tilde{Y}_{n,k}$, as $2K$ goes to infinity, and
	\begin{equation}
			Y_{n,k}\overset{d}{\to}\tilde{Y}_{n,k}  \sim \mathcal{N}\left(\Exp\braces{Y_{n,k}},\mathbb{V}\braces{Y_{n,k}}\right).
	\end{equation}
\end{theorem}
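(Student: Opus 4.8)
The plan is to verify the hypothesis of the Lyapunov CLT, \lmref{lemma:CLT}, for the sum $Y_{n,k}$ as written in \eqref{equ:Yn}, and then read off the limiting Gaussian from the mean and variance already computed in \eqref{equ:variance of Y_{n,k}}. Concretely, I would regard $Y_{n,k}$ as a sum of $N=2K$ independent real random variables: the $2K-1$ terms $X_{k'}\triangleq g_{n,k'}s_{k'}$, $k'\neq k$, each with mean $g_{n,k'}\mu_{n,k'}$ and variance $g_{n,k'}^{2}v_{n,k'}$ as in \eqref{equ:mean and variance of marginals of p0}, together with the single noise term $w'_{n,k}\sim\mathcal{N}\left(g_{n,k}s_k,\sigma_z^2\right)$. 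With $\tilde\mu=\Exp\braces{Y_{n,k}}$ and $\tilde\sigma^2=\mathbb{V}\braces{Y_{n,k}}=\sum_{k'\neq k}g_{n,k'}^{2}v_{n,k'}+\sigma_z^2$, the conclusion follows from \lmref{lemma:CLT} once Lyapunov's condition \eqref{equ:Lyapunov condition} is checked for some $\delta>0$; I would take $\delta=2$, so that only fourth moments enter, and these are all finite here because $\mathbb{S}$ is a finite set and $w'_{n,k}$ is Gaussian.

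The key step is to show that the numerator of the Lyapunov ratio is of smaller order than $\tilde\sigma^{4}$. For the signal terms, $\mu_{n,k'}$ lies in the convex hull of $\mathbb{S}$, so $\abs{s_{k'}-\mu_{n,k'}}\le D$ with $D\triangleq\max_{a,b\in\mathbb{S}}\abs{a-b}<\infty$; hence $\Exp\braces{\abs{s_{k'}-\mu_{n,k'}}^{4}}\le D^{2}v_{n,k'}$, and
\[
\sum_{k'=1,\,k'\neq k}^{2K}\Exp\braces{\abs{X_{k'}-g_{n,k'}\mu_{n,k'}}^{4}}\;\le\;D^{2}\Big(\max_{k'}g_{n,k'}^{2}\Big)\sum_{k'=1,\,k'\neq k}^{2K}g_{n,k'}^{2}v_{n,k'}.
\]
The noise term contributes the constant $\Exp\braces{\abs{w'_{n,k}-g_{n,k}s_k}^{4}}=3\sigma_z^{4}$. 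On the other hand, \eqref{equ:condition on v_{n,k}} gives $\sum_{k'\neq k}g_{n,k'}^{2}v_{n,k'}=2K\zeta+o(K)$ with $\zeta>0$, so $\tilde\sigma^{2}\ge 2K\zeta\left(1+o(1)\right)$ and $\tilde\sigma^{4}$ grows like $K^{2}$. Dividing, and using the mild and standard assumption that the entries of $\bG$ are bounded so that $\max_{k'}g_{n,k'}^{2}=\Order{1}$, the Lyapunov ratio is $\Order{1/K}\to 0$ (the isolated noise contribution $3\sigma_z^{4}/\tilde\sigma^{4}=\Order{1/K^{2}}$ is negligible as well), which establishes \eqref{equ:Lyapunov condition} with $\delta=2$.

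Applying \lmref{lemma:CLT} with $N=2K\to\infty$ then yields $Y_{n,k}\overset{d}{\to}\tilde Y_{n,k}\sim\mathcal{N}\left(\Exp\braces{Y_{n,k}},\mathbb{V}\braces{Y_{n,k}}\right)$, which is the claim. I expect the main obstacle to be essentially the moment bookkeeping in the second paragraph: arranging the fourth-central-moment estimate so that it is precisely condition \eqref{equ:condition on v_{n,k}} (and not some separate high-moment hypothesis) that forces the Lyapunov ratio to vanish, and being explicit about the boundedness of $\bG$ that keeps $\sum_{k'}g_{n,k'}^{4}v_{n,k'}$ of order $\tilde\sigma^{2}$ rather than $\tilde\sigma^{4}$.
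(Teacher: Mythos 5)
Your proof is correct and follows essentially the same route as the paper's: both decompose $Y_{n,k}$ into the $2K-1$ bounded signal terms plus the single Gaussian noise term, verify Lyapunov's condition by bounding each higher central moment by a constant multiple of the corresponding variance (so the numerator grows like $K$ while the denominator $\tilde\sigma^{2+\delta}$ grows faster once \eqref{equ:condition on v_{n,k}} forces $\mathbb{V}\braces{Y_{n,k}}\to\infty$), and both invoke boundedness of the entries of $\bG$. The only difference is your choice $\delta=2$ where the paper uses $\delta=1$, which is immaterial.
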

\begin{proof}
	See Appendix \ref{proof:CLT of Y_{n,k}}.
\end{proof}
Intuitively, the condition \eqref{equ:condition on v_{n,k}} means that as $K$ (or, equivalently, $2K-1$) tends to infinity, the variance of the random variable $\tilde{s}_{n,k'} \triangleq g_{n,k'}s_{k'}, k'\in \setposi{2K}\setminus\braces{k}$, in \eqref{equ:Yn} does not tends to zero, where $g_{n,k'}$ is the $\left(n,k'\right)$-th component of $\bG$ defined in \eqref{equ:definition of G},  or $\tilde{s}_{n,k'}$ does not tend to be a deterministic value.
This guarantees that the CLT holds.
When $2K$ is large, from Theorem \ref{the:CLT of Y_{n,k}}, 
$q\left(y_n,s_k\right)$ is approximately proportional to $f_{\textrm{G}}\left(\tilde{Y}_{n,k};\Exp\braces{Y_{n,k}},\mathbb{V}\braces{Y_{n,k}}\right)\big|_{\tilde{Y}_{n,k} = y_n}$, and thus we can obtain
\begin{equation}\label{equ:pn sk app}
	\begin{split}
		&p_{n,k}\left(s_k;\btheta_{n}\right)\\ &\overset{\left(\textrm{a}\right)}{\propto} \exp\braces{\left( \bd_k+\btheta_{n,k} \right)^T\bt_k -\frac{\left( y_n - \Exp\braces{Y_{n,k}} \right)^2}{2\mathbb{V}\braces{Y_{n,k}}}  }\\
		&=  \exp\braces{ \left( \bd_k+\btheta_{n,k} \right)^T\bt_k -\frac{\left( g_{n,k}s_k -  \tilde{\mu}_{n,k}  \right)^2}{2\mathbb{V}\braces{Y_{n,k}}}  },
	\end{split}
\end{equation}
where $s_k \in \mathbb{S}$, $k\in \setposi{2K}$, $n \in  \setposi{2N_r}$,
$\left(\textrm{a}\right)$ is obtained by removing the constant independent with $s_k$ and $y_n$,
and $\tilde{\mu}_{n,k}, n\in \setposi{2N_r}, k\in \setposi{2K}$, is defined as 
\begin{equation}\label{equ:tilde mu_{n,k}}
	\tilde{\mu}_{n,k} \triangleq y_n - \sum\nolimits_{k' = 1, k'\neq k}^{2K}g_{n,k'}\mu_{n,k'},
\end{equation}
As a summary, when $2K$ is large we approximately have
\begin{subequations}
	\begin{equation}
		p_{n,k}\left(s_k;\btheta_{n}\right)\big|_{s_k = s^{\left(0\right)}} = C_{n,k}\exp\braces{  -\frac{\left( g_{n,k}s^{\left(0\right)} -  \tilde{\mu}_{n,k}  \right)^2}{2\mathbb{V}\braces{Y_{n,k}}}  },
	\end{equation}
	\begin{equation}\label{equ:pro of pn ell}
		\begin{split}
		&p_{n,k}\left(s_k;\btheta_{n}\right)\big|_{s_k = s^{\left(\ell\right)}}\\
		= &C_{n,k}\exp\braces{d_{k,\ell} + \theta_{n,k,\ell}  -\frac{\left( g_{n,k}s^{\left(\ell\right)} -  \tilde{\mu}_{n,k}  \right)^2}{2\mathbb{V}\braces{Y_{n,k}}}  },	
		\end{split}
	\end{equation}
\end{subequations}
where $C_{n,k}$ is the normalization factor,
and $\ell \in \setposi{L-1}$ in \eqref{equ:pro of pn ell}.
Combining \eqref{equ:mp solution}, we can immediately obtain that
\begin{equation}\label{equ:mp solution2}
	\begin{split}
		\theta_{0n,k,\ell} = &\frac{g_{n,k}\left( s^{\left(0\right)}-s^{\left(\ell\right)} \right)\left[ g_{n,k}\left(s^{\left(0\right)} + s^{\left(\ell\right)}\right)-2\tilde{\mu}_{n,k} \right]}{2\mathbb{V}\braces{Y_{n,k}}}\\
		&+\theta_{n,k,\ell}, 
	\end{split}
\end{equation}
where $\ell \in \setposi{L-1}, k\in \setposi{2K}$, and $n\in \setposi{2N_r}$.
Hence, we obtain an approximate solution of the $m$-projection $p_0\left(\bs;\btheta_{0n}\right), n\in \setposi{2N_r}$.
From Theorem \ref{the:CLT of Y_{n,k}}, it is not difficult to check that when the condition \eqref{equ:condition on v_{n,k}} holds, \eqref{equ:mp solution2} is asymptotically accurate as $K$ goes infinity.
From \eqref{equ:xi_n in preli}, the approximation term $\bxi_{n} = \btheta_{0n} - \btheta_{n}$ can be then expressed as
\begin{subequations}\label{equ:xi_{n,k}}
	\begin{equation}
		\bxi_{n} = \left[ \bxi_{n,1}^T, \bxi_{n,2}^T, \ldots, \bxi_{n,2K}^T \right]^T
	\end{equation}
	\begin{equation}
		\bxi_{n,k} = \left[ \xi_{n,k,1}, \xi_{n,k,2},\ldots,\xi_{n,k,L-1}   \right]^T,
	\end{equation}
	\begin{equation}\label{equ:xi_n,k,ell}
		\xi_{n,k,\ell} 
		=\frac{g_{n,k}\left( s^{\left(0\right)}-s^{\left(\ell\right)} \right)\left[ g_{n,k}\left(s^{\left(0\right)} + s^{\left(\ell\right)}\right)-2\tilde{\mu}_{n,k} \right]}{2\mathbb{V}\braces{Y_{n,k}}},
	\end{equation}
\end{subequations}
where $n\in \setposi{2N_r}$, $k\in \setposi{2K}$, and $\ell \in \setposi{L-1}$.
We give the detailed expression of $\xi_{n,k,\ell}$ in \eqref{equ:xi_{n,k,l}}, where $g_{n,k}$ is the $\left(n,k\right)$-th component of the real-valued channel matrix $\bG$ in \eqref{equ:rece model}, $\braces{s^{\left(\ell\right)}}_{\ell = 0}^{L-1}$ defined below \eqref{equ:rece model} are the constellation points for the components of $\bs$, $y_n$ is the $n$-th component of the received signal $\by$ in \eqref{equ:rece model}, $d_{k,\ell}$ is the NP defined by \eqref{equ:d_{k,l}}, and $\sigma_z^2$ is the noise variance of $\bz$ in \eqref{equ:rece model}.

\begin{figure*}[!h]
	\begin{subequations}\label{equ:xi_{n,k,l}}
		\begin{equation}
			\xi_{n,k,\ell} 
			=\frac{g_{n,k}\left( s^{\left(0\right)}-s^{\left(\ell\right)} \right)\braces{ g_{n,k}\left(s^{\left(0\right)} + s^{\left(\ell\right)}\right)-2\left[ y_n - \sum\limits_{k' = 1, k'\neq k}^{2K}g_{n,k'}\left( \frac{ s^{\left(0\right)} + \sum\limits_{\ell=1}^{L-1}s^{\left(\ell\right)} \exp\braces{d_{k',\ell} + \theta_{n,k',\ell}} }{1+\sum\limits_{\ell=1}^{L-1}\exp\braces{ d_{k',\ell} + \theta_{n,k',\ell} }  } \right) \right]  }}{2\mathbb{V}\braces{Y_{n,k}}}
		\end{equation}
		\begin{equation}
			\mathbb{V}\braces{Y_{n,k}} = \sum_{k'=1, k'\neq k}^{2K}g_{n,k'}^2\left[ \frac{ \left(s^{\left(0\right)}\right)^2 + \sum\limits_{\ell=1}^{L-1}\left(s^{\left(\ell\right)}\right)^2 \exp\braces{d_{k',\ell} + \theta_{n,k',\ell}} }{1+\sum\limits_{\ell=1}^{L-1}\exp\braces{ d_{k',\ell} + \theta_{n,k',\ell} }  } \!-\!  \left( \frac{ s^{\left(0\right)} + \sum\limits_{\ell=1}^{L-1}s^{\left(\ell\right)} \exp\braces{d_{k',\ell} + \theta_{n,k',\ell}} }{1+\sum\limits_{\ell=1}^{L-1}\exp\braces{ d_{k',\ell} + \theta_{n,k',\ell} }  }  \right)^2 \right] + \sigma_z^2
		\end{equation}
	\end{subequations}
	\hrule
\end{figure*}

After the approximate $\bxi_n$ is obtained, we update the EACSs of $p_n\left(\bs;\btheta_{n}\right), n\in \setnnega{2N_r}$, as 
\begin{subequations}\label{equ:update of NPs IGA}
	\begin{equation}
		\btheta_{n}\left(t+1\right) = \alpha \sum_{n' = 1, n'\neq n}^{2N_r}\bxi_{n'}\left(t\right) + \left(1-\alpha\right)\btheta_{n}\left(t\right), n\in \setposi{2N_r},
	\end{equation}
	\begin{equation}
		\btheta_{0}\left(t+1\right) = \alpha\sum_{n=1}^{2N_r}\bxi_n\left(t\right)+\left(1-\alpha\right)\btheta_{0}\left(t\right),
	\end{equation}
\end{subequations}
where $0<\alpha\le 1$ is the damping,
and repeat the $m$-projections, calculating  $\bxi_n$ and updating until convergence.
We summarize the IGA-SD in Algorithm \ref{Alg:IGA}.
The computational complexity (the number of real-valued multiplications) of the IGA-SD is $\mathcal{O}\left(16N_rK\left( L+1 \right)   \right)$ (the number of real-valued multiplications) per iteration, where $N_r$ is the number of antennas at the BS , $K$ is the number of users, $L = \sqrt{\tilde{L}}$, and $\tilde{L}$ is the modulation order.

%\begin{figure*}[t!]
%	\begin{minipage}[t]{0.32\linewidth}
%		\centering
%		\includegraphics[width=1.\textwidth]{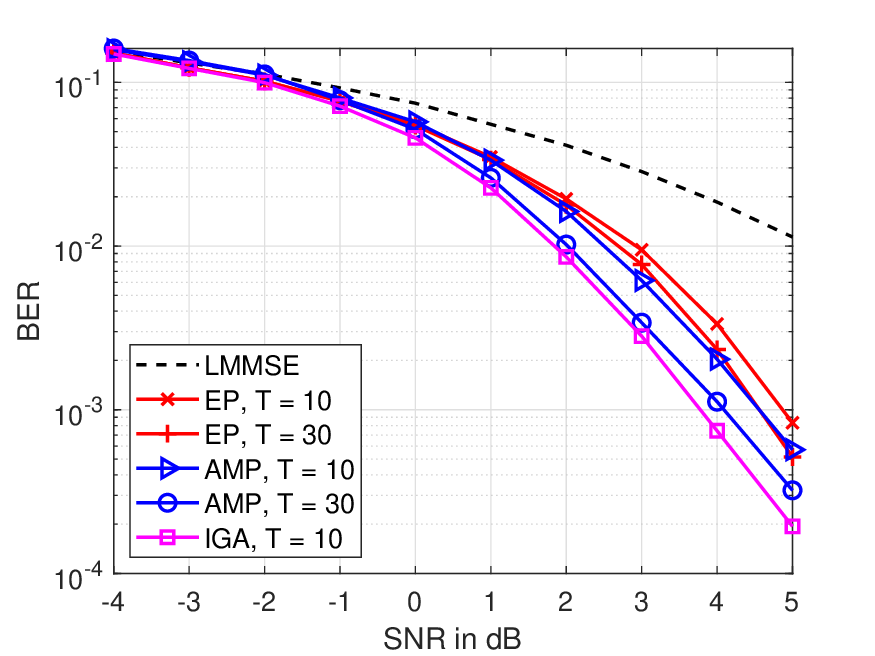}
%		\caption{\small BER performance of IGA compared with AMP, EP and LMMSE under $4$-QAM.}
%		\label{fig:SNR_4QAM}
%	\end{minipage}
%	\hfill
%	\begin{minipage}[t]{0.32\linewidth}
%		\centering
%		\includegraphics[width=1\textwidth]{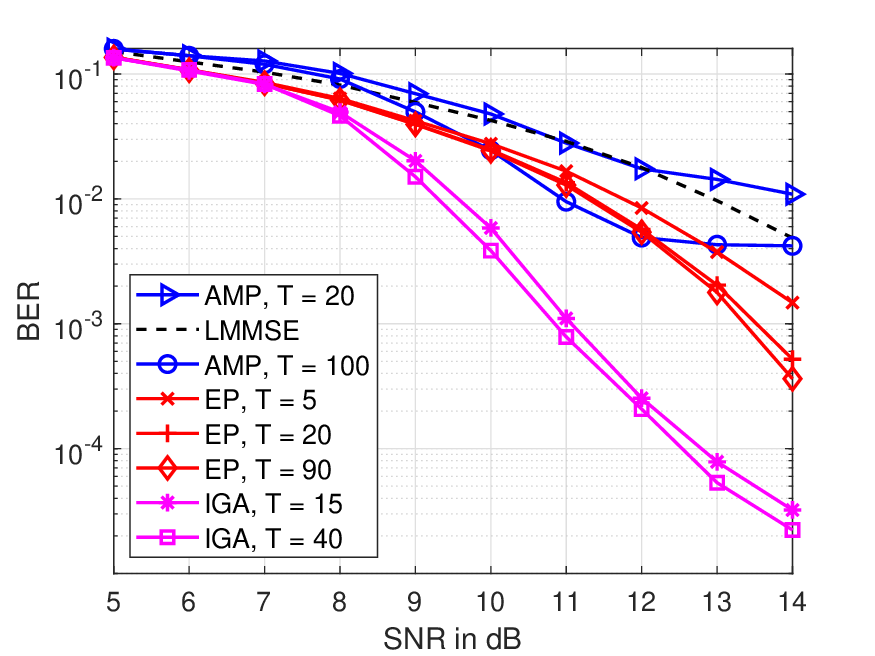}
%		\caption{\small BER performance of IGA compared with AMP, EP and LMMSE under $16$-QAM.}
%		\label{fig:SNR_16QAM}
%	\end{minipage}
%	\hfill
%	\begin{minipage}[t]{0.32\linewidth}
%		\centering
%		\includegraphics[width=1\textwidth]{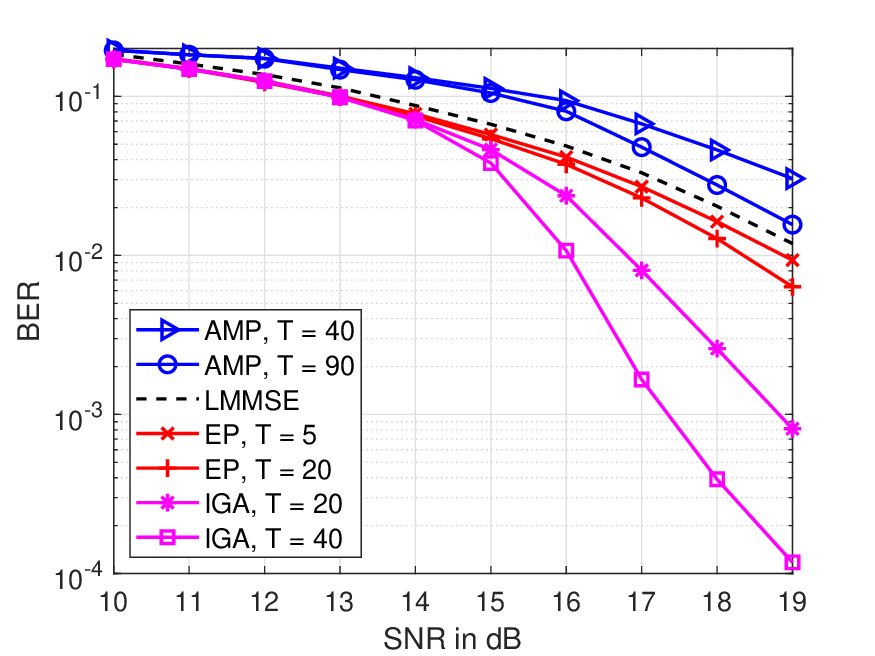}
%		\caption{\small BER performance of IGA compared with AMP, EP and LMMSE under $64$-QAM.}
%		\label{fig:SNR_64QAM}
%	\end{minipage}
%\end{figure*}
%
%\begin{figure*}[t!]
%	\begin{minipage}[t]{0.32\textwidth}
%		\centering
%		\includegraphics[width=1\textwidth]{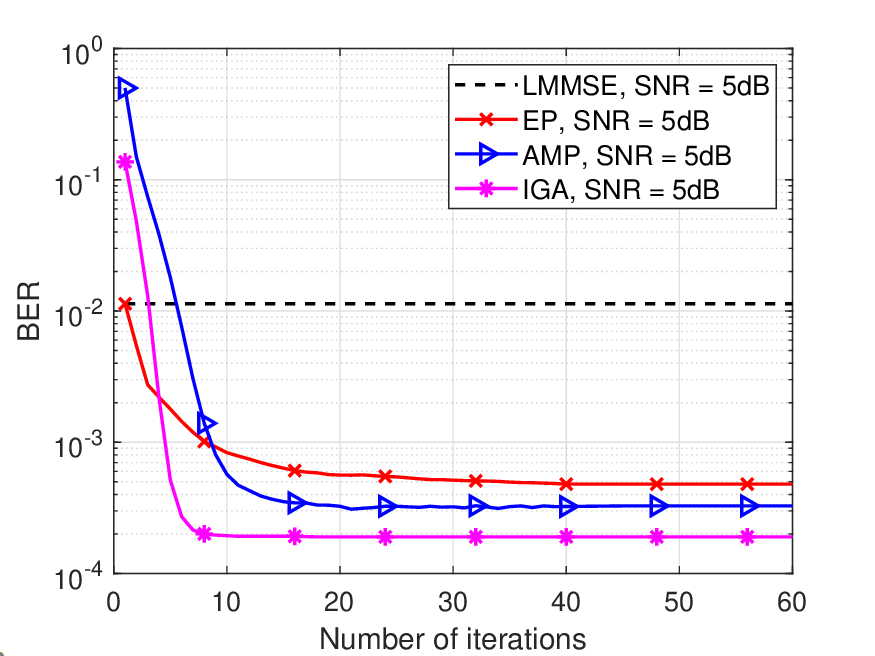}
%		\caption{\small Convergence performance of IGA compared with EP and AMP at SNR = $5$ dB under $4$-QAM.}
%		\label{fig:It_4QAM}
%	\end{minipage}
%	\hfill
%	\begin{minipage}[t]{0.32\linewidth}
%		\centering
%		\includegraphics[width=1\textwidth]{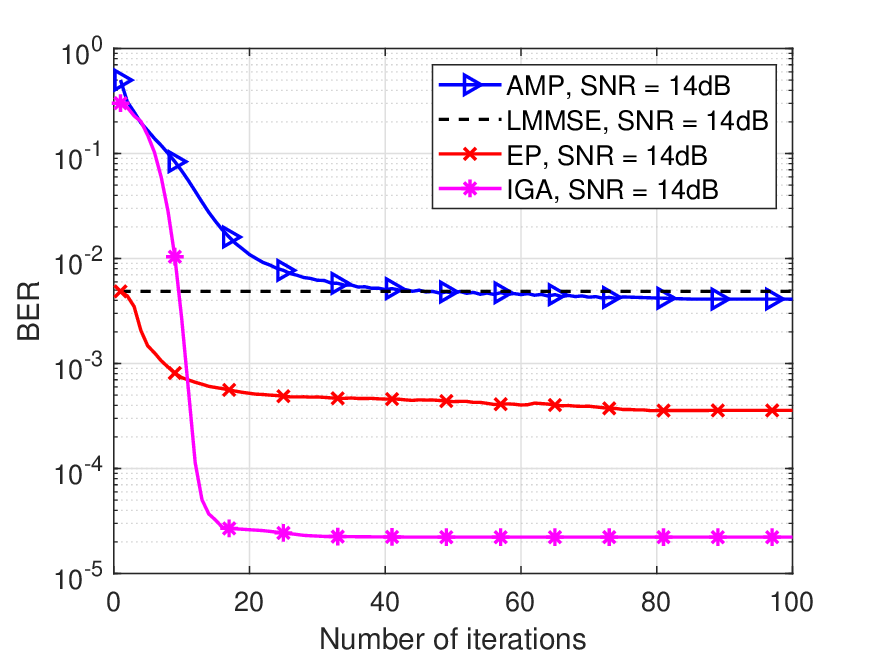}
%		\caption{\small Convergence performance of IGA compared with EP and AMP at SNR = $14$ dB under $16$-QAM.}
%		\label{fig:It_16QAM}
%	\end{minipage}
%	\hfill
%	\begin{minipage}[t]{0.32\linewidth}
%		\centering
%		\includegraphics[width=1\textwidth]{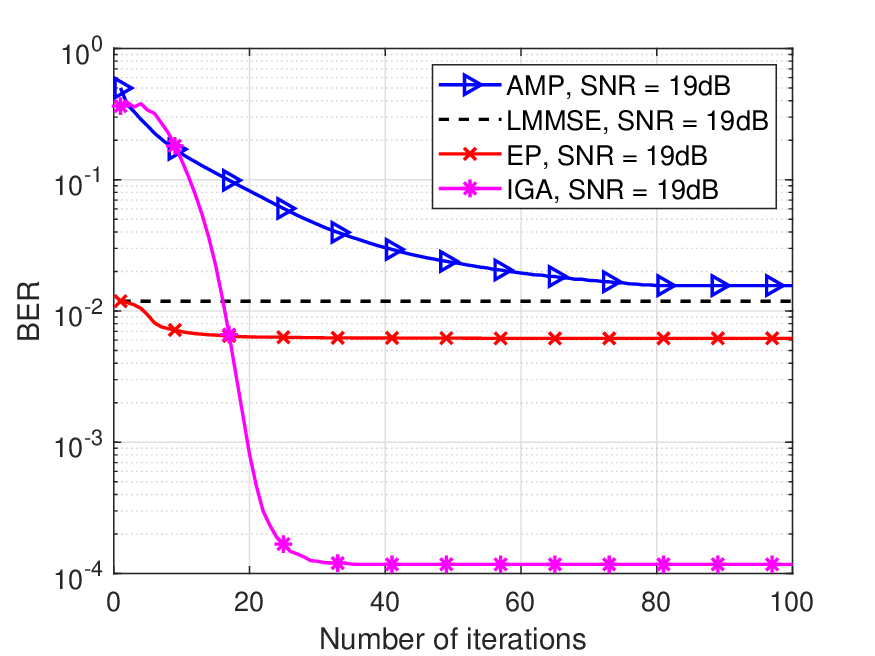}
%		\caption{\small Convergence performance of IGA compared with EP and AMP at SNR = $19$ dB under $64$-QAM.}
%		\label{fig:It_64QAM}
%	\end{minipage}
%\end{figure*}

\section{Simulation Results}
In this section, we provide simulation results to illustrate
the performance of the proposed  IGA-SD. 
The uncoded bit error rate (BER) is adopted as the performance metric. 
In our simulations, the BS comprises a uniform planar array (UPA) of $N_r = N_{r,v}\times N_{r,h}$ antennas, and $N_{r,v}$ and $N_{r,h}$ are the numbers of the antennas at each vertical column and horizontal row, respectively.
We average our results for $1000$ realizations of the channel matrix $\bG$, which is generated by the widely adopted QuaDRiGa \cite{quadriga}. 
We set the simulation scenario to "3GPP\_38.901\_UMa\_NLOS", and the main parameters for the simulations are summarized in Table \ref{tab:para}. 
\begin{table}[htbp]
	\centering
	\caption{Parameter Settings of the Simulation}\label{tab:para}
	%	\footnotesize
	\begin{tabular}{cc}
		\hline
		Parameter &Value \\
		\hline
		Number of BS antennas $N_{r,v}\times N_{r,h}$ & $16\times 64$ \\
		UT number $K$ & $240$ \\
		Center frequency $f_c$ & $4.8$GHz \\
		Modulation Mode & QAM \\
		Modulation Order $\tilde{L}$ & $4$, $16$, and $64$ \\
		\hline
	\end{tabular}
\end{table}
The BS  is located at $\left( 0,0,25 \right)$.
The users are randomly generated in a $120^{\circ}$ sector with radius $r = 200$m around $(0, 0, 1.5)$. 
The channel matrix is normalized as $\Exp\left\lbrace  \lVert \bG \rVert_F^2 \right\rbrace = N_rK $. 
The average power of the transmitted symbol of each user is normalized to $1$, and the $\textrm{SNR}$ is set as $\textrm{SNR} = \frac{K}{\tilde{\sigma}_z^2}$. 
Based on the received signal model \eqref{equ:rece model},
we compare the proposed IGA-SD with the following detectors.\\
\textbf{LMMSE}: The linear minimum-mean-squared error (LMMSE) detector with hard-decision. 
The LMMSE detector is given by
\begin{equation}
	\bs_{\textrm{MMSE}} = \left( \bG^T\bG + \sigma_z^2\bI \right)^{-1}\bG^T\by.
\end{equation}
Then, a component-wise hard-decision is performed as
\begin{equation}
	s_{k,\textrm{MMSE}} = \argmin{s_k \in \mathbb{S}}\left| s_k - \left[ \bs_{\textrm{MMSE}} \right]_k \right|^2, k\in \setposi{2K}.
\end{equation} 
\\
\textbf{EP}: The expectation propagation detector proposed in \cite{6841617}, where the hard-decision is also performed.\\
\textbf{AMP}: Approximate message passing algorithm proposed in \cite{AMP}.
AMP can obtain the approximations of the marginals of the \textsl{a posteriori} distribution $p\left(\bs|\by\right)$.
Thus, AMP is used as an MPM detector (\eqref{equ:MPM dete}).

The computational complexity of the LMMSE detector is $\mathcal{O}\left(8\left(2N_rK^2 + K^3 \right) \right)$ \cite{6841617}.
The computational complexity of the EP detector and AMP are $\mathcal{O}\left( 8\left(N_rK^2 + K^3\right) \right)$ and $\mathcal{O}\left( 8\left(N_rK \right) \right)$ per iteration, respectively \cite{6841617,AMP}.
The complexity of EP detector is the highest among all algorithms. 
When the number of iterations is low (e.g., tens), the complexity of IGA-SD is lower than that of LMMSE detection. 
The computational complexity of AMP is the lowest.

\begin{figure}[htbp]
	\centering
	\includegraphics[width=0.43\textwidth]{}
	\caption{\small BER performance of IGA compared with AMP, EP and LMMSE under $4$-QAM.}
	\label{fig:SNR_4QAM}
\end{figure}

\begin{figure}[htbp]
	\centering
		\includegraphics[width=0.43\textwidth]{}
			\caption{\small Convergence performance of IGA compared with EP and AMP at SNR = $5$ dB under $4$-QAM.}
		\label{fig:It_4QAM}
\end{figure}

We first consider $4$-QAM modulation.
Fig. \ref{fig:SNR_4QAM} shows the BER performance of the IGA-SD compared with LMMSE, EP and AMP.
The iteration numbers of IGA-SD, EP and AMP are set as $10$, $10$ and $30$, and $10$ and $30$, respectively.
Meanwhile, the convergence performance of the iterative algorithms at SNR = $5$dB is shown in Fig. \ref{fig:It_4QAM}.
From Fig. \ref{fig:SNR_4QAM}, we can find that all the iterative algorithms outperform the LMMSE detector within limited iteration numbers.
For BER = $10^{-3}$, the SNR gains of the IGA-SD with $10$ iterations compared to the AMP with $10$ and $30$ iterations are around $0.7$dB and $0.3$dB, respectively.
Meanwhile, IGA-SD with $10$ iterations can improve the EP performance with $10$ and $30$ iterations in $1$dB and $0.7$dB for BER = $10^{-3}$, respectively.
From Fig. \ref{fig:It_4QAM}, it can be found that in the case with $4$-QAM and SNR = $5$dB, the IGA-SD requires around $10$ iterations to converge and achieves the lowest BER performance.
AMP and EP require about $25$ and $45$ iterations to converge, respectively.
The decrease in BER is minor after $30$ iterations for EP.
Moreover, we can find that the BER performance of EP with one iteration is equal to that of LMMSE detector.
This can be attributed to that the EP detector with one iteration is equivalent to the LMMSE detector \cite{6841617}.

\begin{figure}[htbp]
		\centering
			\includegraphics[width=0.43\textwidth]{}
			\caption{\small BER performance of IGA compared with AMP, EP and LMMSE under $16$-QAM.}
			\label{fig:SNR_16QAM}
\end{figure}

\begin{figure}[htbp]
	\centering
	\includegraphics[width=0.43\textwidth]{Fig/SNR_64QAM.eps}
	\caption{\small BER performance of IGA compared with AMP, EP and LMMSE under $64$-QAM.}
	\label{fig:SNR_64QAM}
\end{figure}

\begin{figure}[htbp]
	\centering
			\includegraphics[width=0.43\textwidth]{Fig/Iteration_16QAM.eps}
			\caption{\small Convergence performance of IGA compared with EP and AMP at SNR = $14$ dB under $16$-QAM.}
			\label{fig:It_16QAM}
\end{figure}

\begin{figure}[htbp]
			\centering
			\includegraphics[width=0.43\textwidth]{Fig/Iteration_64QAM.eps}
			\caption{\small Convergence performance of IGA compared with EP and AMP at SNR = $19$ dB under $64$-QAM.}
			\label{fig:It_64QAM}
\end{figure}

Fig. \ref{fig:SNR_16QAM} and and \ref{fig:SNR_64QAM} show the BER performance for $16$-QAM and $64$-QAM, respectively.
From Fig. \ref{fig:SNR_16QAM}, we can find that the BER performance of LMMSE outperforms that of the AMP with $20$ iterations.
Meanwhile, we can find that the gap between IGA-SD and the other algorithms is increasing.
For BER = $10^{-3}$, the SNR gains of the IGA-SD with $15$ iterations compared to the EP with $20$ and $90$ are about $1.2$dB and $0.9$dB, respectively.
The SNR gain for the the IGA-SD with $40$ iterations increases by about $0.2$dB each over the two gains above.
For $64$-QAM, from Fig. \ref{fig:SNR_64QAM}, we can find that the BER performance of the LMMSE detector exceeds  that of the AMP after convergence.  
The gap between IGA-SD and the other algorithms is still increasing.
For BER = $10^{-2}$, IGA-SD with $20$ iterations has improved the EP performance with $5$ and $20$ iterations in $2.1$dB and $1.6$dB, respectively.
The SNR gain for the the IGA-SD with $40$ iterations increases by about $0.7$dB each over the two gains above.

We then show the convergence performances for the cases with $16$-QAM and $64$-QAM in Fig. \ref{fig:It_16QAM} and \ref{fig:It_64QAM}, respectively.
From Fig. \ref{fig:It_16QAM}, it can be found that in the case with $16$-QAM and SNR = $14$dB, the IGA-SD requires around $30$ iterations to converge and achieves the best BER performance.
Both AMP and EP require about $90$ iterations to converge.
From Fig. \ref{fig:It_64QAM}, we can find that in the case with $64$-QAM and SNR = $19$dB, the IGA-SD requires around $30$ iterations to converge and achieves the best BER performance.
AMP and EP require about $80$ and $20$ iterations to converge, respectively.
We can also find that compared to $16$-QAM, AMP and EP require fewer iterations to converge at $64$-QAM modulation.
This could be attributed to the fact that compared to $16$-QAM, the converged BER performances of both AMP and EP have severely degraded in the case with $64$-QAM.

\section{Conclusion}
We have proposed an information geometry approach for ultra-MIMO signal detection in this paper.
The signal detection is formulated as an MPM detection problem based on the approximation of the \textsl{a posteriori} marginals of the transmitted symbols of all users.
To obtain the approximation of the \textsl{a posteriori} marginals, the information geometry theory is introduced.
Specifically, we convert the calculation of the approximation of the \textsl{a posteriori} marginals into an iterative $m$-projection process.
Then, the Lyapunov CLT is applied to have an approximate solution of the $m$-projection between a probability distribution of the AM and the OBM.
Simulation results verify that the IGA-SD can obtain the lowest BER performance within a limited number of iterations compared with the existing approaches, which demonstrates the superiority of the proposed IGA-SD for ultra-massive MIMO systems.

\appendices

\section{Proof of Theorem \ref{lemma:equivalent of mp1}}\label{proof:equivalent of mp1}
We prove this theorem by showing that problem \eqref{equ:mp in detection 1} is strictly convex w.r.t. $\btheta_{0}$.
More precisely, we show that the Hessian of the objective function  defined by the K-L divergence \eqref{equ:K-L divergence in detection} is a positive definite matrix. 
Before proceeding, we first introduce the expectation parameter (EP) and the Fisher information matrices (FIM) of
$p_0\left(\bs;\btheta_0\right)$ and $p_n\left(\bs;\btheta_n\right), n\in \setposi{2N_r}$ in \eqref{equ:defintion of p_0} and \eqref{equ:pn}, respectively.
$\braces{p_n\left(\bs;\btheta_n\right)}_{n=0}^{2N_r}$ can be expressed as
\begin{equation}
	p_n\left(\bs;\btheta_n\right) = \exp\braces{\btheta_n^T\bt + m_n\left(\bs\right) - \psi_n\left(\btheta_n\right)},
\end{equation} 
where $n\in \setnnega{2N_r}$ and $m_n\left(\bs\right)$ is a function independent of $\btheta_n$.
Specifically, we have $m_0\left(\bs\right) \triangleq \bd^T\bt$ for $p_0\left(\bs;\btheta_0\right)$,
and  $m_n\left(\bs\right) = \bd^T\bt + c_n\left(\bs,y_n\right)$ for $p_n\left(\bs;\btheta_n\right), n\in\setposi{2N_r}$.
Since the free energy $\psi_n\left(\btheta_n\right)$ is constrained by the normalization condition, we have
\begin{equation}\label{equ:constrain of free energy1}
	1 = \sum_{\bs \in \mathbb{S}^{2K}} \exp\braces{\btheta_n^T\bt + m_n\left(\bs\right) - \psi_n\left(\btheta_n\right)}, n\in \setnnega{2N_r},
\end{equation}
from which we can obtain
\begin{equation}\label{equ:constrain of free energy2}
	\psi_n\left(\btheta_n\right) = \ln \left( \sum_{\bs \in \mathbb{S}^{2K}}\exp\braces{ \btheta_n^T\bt + m_n\left(\bs\right) } \right), n\in \setnnega{2N_r}.
\end{equation}
Then, from \eqref{equ:constrain of free energy2}, the partial derivative of $\psi_n\left(\btheta_n\right)$ is 
\begin{align}
		&\frac{\partial \psi_n\left(\btheta_n\right)}{\partial \btheta_n } \nonumber \\
		=&\ \frac{1}{ \sum_{\bs \in \mathbb{S}^{2K}}\exp\braces{ \btheta_n^T\bt + m_n\left(\bs\right) } } 
		 \sum_{\bs \in \mathbb{S}^{2K}}\exp\braces{\btheta_n^T\bt + m_n\left(\bs\right)  }\bt \nonumber \\
	  \equaa& \ \exp\braces{-\psi_n\left(\btheta_n\right)}\sum_{\bs \in \mathbb{S}^{2K}}\exp\braces{\btheta_n^T\bt + m_n\left(\bs\right)  }\bt\nonumber \\
	  =& \ \sum_{\bs \in \mathbb{S}^{2K}}p_n\left(\bs;\btheta_n\right)\bt = \Exp_{p_n\left(\bs;\btheta_n\right)}\braces{ \bt } = \bbeta_n\left(\btheta_n\right),
\end{align}
where $\left(\textrm{a}\right)$ comes from \eqref{equ:constrain of free energy1}.
$\bbeta_n\left(\btheta_n\right) \in \bbR^{2K\left(L-1\right)}$ above is referred as to the EP of $p_n\left(\btheta_n\right), n\in \setnnega{2N_r}$.
Then, the Hessian of $\psi_n\left( \btheta_n \right)$ is 
\begin{align}
	\boldsymbol{\mathcal{I}}_n\left(\btheta_n\right) &\triangleq \frac{\partial^2 \psi_n\left(\btheta_n\right)}{\partial \btheta_n \partial \btheta_n^T}=\frac{\partial\bbeta_n\left(\btheta_n\right)}{\partial \btheta_n^T} 
	= \sum_{\bs \in \mathbb{S}^{2K}}\bt\frac{\partial p_n\left( \bs;\btheta_n \right)}{\partial \btheta_n^T}\nonumber \\
	&= \sum_{\bs \in \mathbb{S}^{2K}}\bt p_n\left( \bs;\btheta_n \right)\left( \bt^T - \frac{\partial \psi_n\left(\btheta_n\right)}{\partial \btheta_n^T } \right) \nonumber \\
	&=\sum_{\bs \in \mathbb{S}^{2K}}\bt p_n\left( \bs;\btheta_n \right)\left( \bt - \bbeta_n\left(\btheta_n\right)  \right)^T\\
	&= \sum_{\bs \in \mathbb{S}^{2K}} p_n\left( \bs;\btheta_n \right)\left( \bt - \bbeta_n\left(\btheta_n\right)  \right)\left( \bt - \bbeta_n\left(\btheta_n\right)  \right)^T
	 \nonumber\\
	 &= \Exp_{p_n\left(\bs;\btheta_n\right)}\braces{ \left( \bt - \bbeta_n\left(\btheta_n\right)  \right)\left( \bt - \bbeta_n\left(\btheta_n\right)  \right)^T} , \nonumber 
\end{align}
where $n \in \setnnega{2N_r}$.
$\boldsymbol{\mathcal{I}}_n\left(\btheta_n\right) \in \bbR^{2K\left(L-1\right)\times 2K\left(L-1\right)}$ above is referred as to the FIM of $p_n\left(\bs;\btheta_n\right), n\in \setnnega{2N_r}$.
From the definition, we can readily show that $\boldsymbol{\mathcal{I}}_n\left(\btheta_n\right), n\in \setnnega{2N_r}$, is positive semi-definite.
Particularly, the FIM of $p_0\left(\bs;\btheta_{0}\right)$ in \eqref{equ:defintion of p_0} is positive definite.
The reason is as follows.
Since $\braces{s_k}_{k=1}^{2K}$ are independent with each other given the joint probability distribution $p_0\left(\bs;\btheta_{0}\right)$, 
we can readily obtain $	p_0\left(\bs;\btheta_{0}\right) = \prod_{k=1}^{2K}p_{0,k}\left(s_k;\btheta_{0,k}\right),$
%\begin{equation}
%	p_0\left(\bs;\btheta_{0}\right) = \prod_{k=1}^{2K}p_{0,k}\left(s_k;\btheta_{0,k}\right),
%\end{equation} 
where $p_{0,k}\left(s_k;\btheta_{0,k}\right)$ defined by \eqref{equ:marginals of p0} is the probability distribution of $s_k, k\in \setposi{2K}$.
Then, $\bbeta_0\left(\btheta_{0}\right)\in \bbR^{2K\left(L-1\right)}$ can be expressed as 
%$\bbeta_0\left(\btheta_{0}\right) = \left[\bbeta_{0,1}^T\left(\btheta_{0,1}\right), \bbeta_{0,2}^T\left(\btheta_{0,2}\right),\ldots,\bbeta_{0,K}^T\left(\btheta_{0,K}\right) \right]^T \in \bbR^{K\left(L-1\right)}$,
\begin{equation}
	\bbeta_0\left(\btheta_{0}\right) = \left[\bbeta_{0,1}^T\left(\btheta_{0,1}\right), \bbeta_{0,2}^T\left(\btheta_{0,2}\right),\ldots,\bbeta_{0,2K}^T\left(\btheta_{0,2K}\right) \right]^T, 
\end{equation}
where $\bbeta_{0,k}\left( \btheta_{0,k} \right) \in \bbR^{\left(L-1\right)}$ is given by,
\begin{align}\label{equ:EP of marginals of p0}
	\bbeta_{0,k}\left( \btheta_{0,k} \right) &= \Exp_{p_0\left(\bs;\btheta_{0}\right)}\braces{\bt_k} = \Exp_{p_{0,k}\left(s_k;\btheta_{0,k}\right)}\braces{\bt_k}. 
\end{align}
From the last equation in \eqref{equ:EP of marginals of p0}, we refer to $\bbeta_{0,k}\left( \btheta_{0,k} \right)$ as the EP of $p_{0,k}\left(s_k;\btheta_{0,k}\right)$.
Denote the $\ell$-th component in $\bbeta_{0,k}\left( \btheta_{0,k} \right)$ as $\eta_{0,k,\ell}\left( \btheta_{0,k} \right), \ell \in \setposi{L-1}, k\in \setposi{2K}$.
From  $\bt_k =  \left[t_{k,1},t_{k,2},\ldots,t_{k,L-1}  \right]^T , k\in \setposi{2K}$, 
and $t_{k,\ell} = \delta\left( s_k - s^{\left(\ell\right)} \right), \ell \in \setposi{L-1}$,
we can obtain
\begin{align}\label{equ:eta 0,k,l}
	\eta_{0,k,\ell}\left( \btheta_{0,k} \right) &= \Exp_{p_{0,k}\left(s_k;\btheta_{0,k}\right)}\braces{\delta\left( s_k - s^{\left(\ell\right)} \right)} \nonumber\\
	&= p_{0,k}\left(s_k;\btheta_{0,k}\right)\Big|_{s_k = s^{\left(\ell\right)}} > 0.
\end{align}
We then define $\left(L-1\right)\times \left(L-1\right)$ dimensional covariance matrices 
$\bC\left( \bt_k,\bt_{k'} \right)$ as
\begin{align}
	&\bC\left( \bt_k,\bt_{k'} \right) \triangleq \nonumber\\ 
	&\Exp_{p_0\left(\bs;\btheta_{0}\right)}\braces{ \left( \bt_k-\bbeta_{0,k}\left(\btheta_{0,k}\right) \right)   \left( \bt_{k'}-\bbeta_{0,k'}\left(\btheta_{0,k'}\right) \right)^T },
\end{align}
where $k, k' \in \setposi{2K}$.
Since $\braces{s_k}_{k=1}^{2K}$ are independent with each other given $p_0\left(\bs;\btheta_{0}\right)$, we can obtain 
\begin{align}\label{equ:Covariance of tk and tk'}
	&\bC\left(\bt_k,\bt_{k'}\right)   \\
  =	&\begin{cases}
  \bR\left(\bt_k\right)- \bbeta_{0,k}\left(\btheta_{0,k}\right)\bbeta^T_{0,k}\left(\btheta_{0,k}\right), & \textrm{when} \ k = k',\\
  		\mathbf{0}, & \textrm{otherwise},
  	\end{cases}\nonumber
\end{align}
where 
$\mathbf{0} \in \bbR^{\left(L-1\right)\times \left(L-1\right)}$ is the zero matrix,
$\bR\left(\bt_k\right) \in \bbR^{\left(L-1\right)\times \left(L-1\right)}$ is given by
\begin{equation}
	 \bR\left(\bt_k\right) = \Exp_{p_0\left(\bs;\btheta_{0}\right)}\braces{ \bt_k\bt_k^T }=	\Exp_{p_{0,k}\left(s_k;\btheta_{0,k}\right)}\braces{ \bt_k\bt_k^T }.
\end{equation}
From the definition of $\bt_k$, the $\left(i,j\right)$-th element in $\bR\left(\bt_k\right)$ can be expressed
as
\begin{align}
	&\left[ \bR\left(\bt_k\right) \right]_{i,j} = \Exp_{p_{0,k}\left(s_k;\btheta_{0,k}\right)}\braces{ \delta\left( s_k - s^{\left(i\right)} \right) \delta\left( s_k - s^{\left(j\right)} \right) } \nonumber \\
	&=
	\begin{cases}
		p_{0,k}\left(s_k;\btheta_{0,k}\right)\Big|_{s_k = s^{\left(i\right)}} = \eta_{0,k,i}\left(\btheta_{0,k}\right), &\textrm{when}\ i=j,\\
		0, &\textrm{otherwise},
	\end{cases}
\end{align}
where $i,j \in \setposi{L-1}$.
Hence, we can obtain $\bR\left(\bt_k\right) = \Diag{\bbeta_{0,k}\left(\btheta_{0,k}\right)}, k \in \setposi{2K}$.
From \eqref{equ:eta 0,k,l}, we can readily check that $\bR\left(\bt_k\right)$ is positive definite.
Also, we refer to $\bC\left(\bt_k,\bt_k\right)$ as the FIM of $p_{0,k}\left(s_k;\btheta_{0,k}\right), k\in \setposi{2K}$, and we denote  $\boldsymbol{\mathcal{I}}_{0,k}\left(\btheta_{0,k}\right) \triangleq \bC\left(\bt_k,\bt_k\right)$.
The FIM of $p_0\left(\bs;\btheta_{0}\right)$ can be then  expressed as
\begin{align}\label{equ:FIM of p0}
	&\boldsymbol{  \mathcal{I}}_0\left(\btheta_{0}\right) = \nonumber \\
	&\left[ {\begin{array}{*{20}{c}}
			{\boldsymbol{\mathcal{I}}_{0,1}\left(\btheta_{0,1}\right)} & {\bC\left(\bt_1,\bt_2\right) } & {\cdots} & { \bC\left(\bt_1,\bt_{2K}\right)}\\
			{ \bC\left(\bt_2,\bt_1\right) }&{ \boldsymbol{\mathcal{I}}_{0,2}\left(\btheta_{0,2}\right) }&{ \cdots }&{ \bC\left(\bt_2,\bt_{2K}\right) }\\
			{ \vdots }&{ \cdots }&{ \ddots }&{ \vdots }\\
			{ \bC\left(\bt_{2K},\bt_1\right) }&{ \cdots }&{ \cdots }&{ \boldsymbol{\mathcal{I}}_{0,2K}\left(\btheta_{0,2K}\right) }
	\end{array}} \right]. 
\end{align}
From \eqref{equ:Covariance of tk and tk'}, we can obtain that the FIM $\boldsymbol{  \mathcal{I}}_0\left(\btheta_{0}\right)$ of $p_0\left(\bs;\btheta_{0}\right)$ is a block diagonal matrix with the FIMs of $\braces{p_{0,k}\left(s_k;\btheta_{0,k}\right)}_{k=1}^{2K}$ located along its main diagonal, 
i.e.,
\begin{align}\label{equ:FIM of p02}
	& \ \boldsymbol{\mathcal{I}}_0\left(\btheta_{0}\right)  \\
	= &\ \Bdiag{\boldsymbol{  \mathcal{I}}_{0,1}\left(\btheta_{0,1}\right),\boldsymbol{  \mathcal{I}}_{0,2}\left(\btheta_{0,2}\right), \ldots,\boldsymbol{  \mathcal{I}}_{0,2K}\left(\btheta_{0,2K}\right) }.\nonumber
\end{align}
We then show that each FIM $\boldsymbol{\mathcal{I}}_{0,k}\left(\btheta_{0,k}\right), k\in \setposi{2K}$, is positive definite.
From the definition, we have
\begin{align}
	&\boldsymbol{\mathcal{I}}_{0,k}\left(\btheta_{0,k}\right) = \bR\left(\bt_k\right) - \bbeta_{0,k}\left(\btheta_{0,k}\right)\bbeta^T_{0,k}\left(\btheta_{0,k}\right) \nonumber \\
	=&\Diag{\bbeta_{0,k}\left(\btheta_{0,k}\right)} - \bbeta_{0,k}\left(\btheta_{0,k}\right)\bbeta^T_{0,k}\left(\btheta_{0,k}\right).
\end{align}
Given a non-zero vector $\ba  = \left[ a_1,a_2,\ldots,a_{L-1} \right]\in \bbR^{\left(L-1\right)}$, we have (abbreviate $\eta_{0,k,\ell}\left( \btheta_{0,k} \right)$ to $\eta_{0,k,\ell}$ starting from the second equation)
%\begin{align}
%	&\ba^T\boldsymbol{\mathcal{I}}_{0,k}\left(\btheta_{0,k}\right)\ba \nonumber\\
%	=& \sum_{\ell=1}^{L-1}\eta_{0,k,\ell}\left( \btheta_{0,k} \right)a_\ell^2 - \left(\sum_{\ell=1}^{L-1}\eta_{0,k,\ell}\left( \btheta_{0,k} \right)a_\ell  \right)^2 \nonumber \\
%	=&\sum_{\ell=1}^{L-1}\eta_{0,k,\ell}\left( \btheta_{0,k} \right)a_\ell^2 \nonumber \\
%	&- \left( \sum_{\ell=1}^{L-1}\sqrt{\eta_{0,k,\ell}\left( \btheta_{0,k} \right)}\sqrt{\eta_{0,k,\ell}\left( \btheta_{0,k} \right) a_\ell^2}  \right)^2 \nonumber \\
%	\overset{\left(\textrm{a}\right)}{\ge}&\sum_{\ell=1}^{L-1}\eta_{0,k,\ell}\left( \btheta_{0,k} \right)a_\ell^2 \nonumber \\
%	&- \left(\sum_{\ell=1}^{L-1}\eta_{0,k,\ell}\left( \btheta_{0,k} \right)\right) \left( \sum_{\ell=1}^{L-1}\eta_{0,k,\ell}\left( \btheta_{0,k} \right)a_\ell^2 \right) \nonumber \\
%	=&\left( 1- \sum_{\ell=1}^{L-1}\eta_{0,k,\ell}\left( \btheta_{0,k} \right)\right)\sum_{\ell=1}^{L-1}\eta_{0,k,\ell}\left( \btheta_{0,k} \right)a_\ell^2 \nonumber \\
%	\overset{\left(\textrm{b}\right)}{=}& \ p_{0,k}\left(s_k;\btheta_{0,k}\right)\Big|_{s_k = s^{\left(0\right)}}\times 
%	\ba^T\bR\left(\bt_k\right)\ba \overset{\left(\textrm{c}\right)}{>}0,
%\end{align}
\begin{align}
	&\ba^T\boldsymbol{\mathcal{I}}_{0,k}\left(\btheta_{0,k}\right)\ba \nonumber\\
	=& \sum_{\ell=1}^{L-1}\eta_{0,k,\ell}\left( \btheta_{0,k} \right)a_\ell^2 - \left(\sum_{\ell=1}^{L-1}\eta_{0,k,\ell}\left( \btheta_{0,k} \right)a_\ell  \right)^2 \nonumber \\
	=&\sum_{\ell=1}^{L-1}\eta_{0,k,\ell}a_\ell^2 - \left( \sum_{\ell=1}^{L-1}\sqrt{\eta_{0,k,\ell}}\sqrt{\eta_{0,k,\ell} a_\ell^2}  \right)^2 \nonumber \\
	{\ge}&\sum_{\ell=1}^{L-1}\eta_{0,k,\ell}a_\ell^2 - \left(\sum_{\ell=1}^{L-1}\eta_{0,k,\ell}\right) \left( \sum_{\ell=1}^{L-1}\eta_{0,k,\ell}a_\ell^2 \right) \nonumber \\
	=&\left( 1- \sum_{\ell=1}^{L-1}\eta_{0,k,\ell}\right)\sum_{\ell=1}^{L-1}\eta_{0,k,\ell}a_\ell^2 \nonumber \\
	\overset{\left(\textrm{a}\right)}{=}& \ p_{0,k}\left(s_k;\btheta_{0,k}\right)\Big|_{s_k = s^{\left(0\right)}}\times 
	\ba^T\bR\left(\bt_k\right)\ba \overset{\left(\textrm{b}\right)}{>}0,
\end{align}
where 
%$\left(\textrm{a}\right)$ comes from the Cauchy inequality, 
$\left(\textrm{a}\right)$ comes from that from \eqref{equ:eta 0,k,l} we have $\eta_{0,k,\ell}\left(\btheta_{0,k}\right) = p_{0,k}\left(s_k;\btheta_{0,k}\right)\Big|_{s_k = s^{\left(\ell\right)}}, \ell \in \setposi{L-1}$,  and $\sum_{\ell=0}^{L-1}  p_{0,k}\left(s_k;\btheta_{0,k}\right)\Big|_{s_k = s^{\left(\ell\right)}} = 1$,
and $\left(\textrm{b}\right)$ comes from that $p_{0,k}\left(s_k;\btheta_{0,k}\right)\Big|_{s_k = s^{\left(0\right)}} > 0$, and $\bR\left(\bt_k\right)$ is positive definite.
Hence,  $\boldsymbol{\mathcal{I}}_{0,k}\left(\btheta_{0,k}\right), k\in \setposi{2K}$, is positive definite.
From \eqref{equ:FIM of p02}, we can readily obtain that $\boldsymbol{\mathcal{I}}_0\left(\btheta_{0}\right)$ is also positive definite.
Also, we can obtain that $\psi_0\left(\btheta_{0}\right)$ is a strictly convex function of $\btheta_{0}$.

We now show that the problem in \eqref{equ:mp in detection 1} has a minimum and the solution of it is unique, which satisfies \eqref{equ:mp invariant1}.
From \eqref{equ:K-L divergence in detection}, the partial derivative of $\Dkl{p\left(\bs\right)}{p_0\left(\bs;\btheta_0\right)}$ is 
\begin{align}
	\frac{\partial D_{\textrm{KL}}}{\partial \btheta_{0}} &= -\sum_{\bs \in \mathbb{S}^K}p\left(\bs\right)\frac{\partial \ln p_0\left(\bs;\btheta_0\right)}{\partial \btheta_{0}}\nonumber \\
	&=-\sum_{\bs \in \mathbb{S}^K}p\left(\bs\right)\frac{\partial\left( \btheta_0^T\bt - \psi_0\left(\btheta_{0}\right) \right)}{\partial \btheta_{0}}\nonumber\\
	&= - \sum_{\bs \in \mathbb{S}^K}p\left(\bs\right)\left( \bt - \bbeta_0\left(\btheta_{0}\right) \right)
	= - \bbeta + \bbeta_0\left(\btheta_{0}\right).
\end{align}
The Hessian of $\Dkl{p\left(\bs\right)}{p_0\left(\bs;\btheta_0\right)}$ is 
\begin{align}
	\frac{\partial^2 D_{\textrm{KL}}}{\partial \btheta_{0}\partial \btheta_{0}^T} = \frac{\partial \left( - \bbeta + \bbeta_0\left(\btheta_{0}\right) \right) }{\partial \btheta_{0}^T} = \boldsymbol{\mathcal{I}}_0\left(\btheta_{0}\right).
\end{align}
Since the FIM $\boldsymbol{\mathcal{I}}_0\left(\btheta_{0}\right)$ is positive definite, $\Dkl{p\left(\bs\right)}{p_0\left(\bs;\btheta_0\right)}$ is a strictly convex function of $\btheta_{0}$ and \eqref{equ:mp in detection 1} has a unique solution $\btheta_{0}^{\star}$, which satisfies the 
the first order sufficient condition, i.e.,
\begin{equation}
 \bbeta_0\left(\btheta_{0}^{\star}\right) = \bbeta.
\end{equation}
This completes the proof.

\section{Proof of Corollary \ref{the:equivalent of mp2}}\label{proof:equivalent of mp2}
Denote $\bbeta$ as $\bbeta \triangleq \left[\bbeta_{1}^T,\bbeta_{2}^T,\ldots,\bbeta_{2K}^T \right]^T \in \bbR^{2K\left(L-1\right)}$, where $\bbeta_{k} \triangleq \Exp_{p\left(\bs\right)}\braces{\bt_k}\in \bbR^{\left(L-1\right)}$.
Denote the $\ell$-th element in $\bbeta_{k}$ as $\eta_{k,\ell}$, where $k\in \setposi{2K}$ and $\ell \in \setposi{L-1}$.
Then, from $\bt_k = \left[t_{k,1},t_{k,2},\ldots,t_{k,L-1}  \right]^T$ and $t_{k,\ell} = \delta\left( s_k - s^{\left(\ell\right)} \right)$, we can obtain
\begin{align}\label{equ:mariginal}
	\eta_{k,\ell} &= \Exp_{p\left(\bs\right)}\braces{\delta\left( s_k - s^{\left(\ell\right)} \right)} 
	= \sum_{\bs \in \mathbb{S}^{2K}}p\left(\bs\right)\delta\left( s_k - s^{\left(\ell\right)} \right) \nonumber \\
	&= \sum_{\bs_{\setminus k}\in \mathbb{S}^{2K-1}}p\left( \bs \right)\Big|_{s_k = s^{\left(\ell\right)}} 
	\equaa p_k\left(s_k\right)\Big|_{s_k = s^{\left(\ell\right)}},\nonumber
\end{align}
where $ k\in \setposi{2K}$, $\ell \in \setposi{L-1}$, 
$\left(\textrm{a}\right)$ comes from \eqref{equ:marginals of p(s)},
and $p_k\left(s_k\right)$ is the marginal distribution of $s_k$ given $p\left(\bs\right)$.
Similar with the process in Appendix \ref{proof:equivalent of mp1}, $\bbeta_0\left(\btheta_{0}^{\star}\right)$ can be expressed as
$\bbeta_0\left(\btheta_{0}^{\star}\right) \triangleq \left[ \bbeta_{0,1}^T\left(\btheta_{0,1}^{\star}\right),\bbeta_{0,2}^T\left(\btheta_{0,2}^{\star}\right),\ldots,\bbeta_{0,2K}^T\left(\btheta_{0,2K}^{\star} \right) \right]^T \in \bbR^{2K\left(L-1\right)}$,
where $\bbeta_{0,k}\left( \btheta_{0}^{\star} \right) \triangleq \Exp_{p_{0,k}\left(s_k;\btheta_{0,k}^{\star}\right)}\braces{\bt_k}\in \bbR^{\left(L-1\right)}$.
Denote the $\ell$-th element in $\bbeta_{0,k}\left( \btheta_{0}^{\star} \right)$ as $\eta_{0,k,\ell}\left(\btheta_{0}^{\star}\right)$.
We can obtain 
\begin{equation}
   \eta_{0,k,\ell}\left(\btheta_{0}^{\star}\right)	= p_{0,k}\left(s_k;\btheta_{0,k}^{\star}\right)\Big|_{s_k = s^{\left(\ell\right)}},
\end{equation}
through a process the same as that of \eqref{equ:eta 0,k,l}, where $p_{0,k}\left(s_k;\btheta_{0,k}^{\star}\right)$ is the (marginal) probability distribution of $s_k$ given the joint probability distribution $p_0\left(\bs;\btheta_{0}^{\star}\right)$.
Thus, $\bbeta = \bbeta_0\left(\btheta_{0}^{\star}\right)$ is equivalent to 
\begin{equation}\label{equ:equation of marginals2}
			p_k\left(s_k\right) = p_{0,k}\left(s_k;\btheta_{0,k}^{\star} \right), s_k\in \mathbb{S}, k\in \setposi{2K}.
\end{equation}
From Theorem \ref{lemma:equivalent of mp1}, 
$\bbeta = \bbeta_0\left(\btheta_{0}^{\star}\right)$ is a necessary and sufficient condition for $p_0\left(\bs;\btheta_{0}^{\star}\right)$ being the $m$-projection of $p\left(\bs\right)$ onto $\mathcal{M}_0$.
Thus, \eqref{equ:equation of marginals2} is also a necessary and sufficient condition for $p_0\left(\bs;\btheta_{0}^{\star}\right)$ being the $m$-projection of $p\left(\bs\right)$ onto $\mathcal{M}_0$.
Then, combining \eqref{equ:relation} and \eqref{equ:mp3}, we can immediately obtain \eqref{equ:mp solution 0}
This completes the proof.

\section{Proof of Theorem \ref{the:CLT of Y_{n,k}}}\label{proof:CLT of Y_{n,k}}
From the last equation of \eqref{equ:Yn} we have
\begin{align}\label{equ:Y_{n,k} in App}
	Y_{n,k} = \sum_{k'=1, k'\neq k}^{2K}g_{n,k'}s_{k'} + w'_{n,k}
\end{align}
Given $n$ and $k$, let $\braces{X_{k'}}_{k'=1}^{2K}$ be a sequence of random variables, of which each element is defined as
\begin{equation}
	\begin{cases}
		X_{k'} = w'_{n,k},    &k' = k , \\
		X_{k'} =  g_{n,k'}s_{k'},   &\textrm{otherwise}.
	\end{cases}
\end{equation}
Then, $Y_{n,k}$ is the sum of the sequence $\braces{X_{k'}}_{k'=1}^{2K}$.
From the probability distribution of $s_{k'}$ and $w'_{n,k}$ in \eqref{equ:Yn} (also in \eqref{equ:Y_{n,k} in App}) we have
\begin{subequations}
	\begin{align}
		\Exp\braces{X_{k'}} = 
		\begin{cases}
			g_{n,k}s_{k}, & k' = k,\\
			g_{n,k'}\mu_{n,k'}, &\textrm{otherwise},
		\end{cases}
	\end{align}
	\begin{align}\label{equ:variance of X_{k'}}
				\mathbb{V}\braces{X_{k'}} = 
		\begin{cases}
			\sigma_z^2, & k' = k,\\
			g_{n,k'}^2v_{n,k'}, &\textrm{otherwise}.
		\end{cases}
	\end{align}
\end{subequations}
Next, we show that when \eqref{equ:condition on v_{n,k}} holds, the sequence $\braces{X_{k'}}_{k'=1}^{2K}$ satisfies the Lyapunov's condition \eqref{equ:Lyapunov condition}, where $\delta = 1$.
When $k'=k$, $X_{k'}$ is a real Gaussian random variable, and its third central absolute moment is given by \cite{DBLP:journals/corr/Winkelbauer12}
\begin{equation}
	\Exp\braces{\left|X_{k'}  - \Exp\braces{X_{k'}} \right|^3} = 2\sigma_z^3\sqrt{\frac{2}{\pi}} = 2\sigma_z\sqrt{\frac{2}{\pi}}\mathbb{V}\braces{X_{k'}},
\end{equation}
which is bounded.
When $k'\neq k$, we have $X_{k'} = g_{n,k'}s_{k'}$.
Since $s_{k'} \in \mathbb{S}$ and $\mu_{n,k'} = \Exp\braces{s_{k'}}$ are bounded, we can obtain that $X_{k'}$, $\Exp\braces{X_{k'}}$ and $\left( X_{k'}- \Exp\braces{X_{k'}}  \right)$ are also bounded when $g_{n,k'}$ is bounded.
Suppose that $\left|X_{k'} - \Exp\braces{X_{k'}} \right| \le \epsilon$, we can readily obtain that
\begin{equation}
	\Exp\braces{\left| X_{k'} - \Exp\braces{X_{k'}}  \right|^3} \le \epsilon\mathbb{V}\braces{X_{k'}}.
\end{equation}
Let $\varepsilon = \max\left( \epsilon,2\sigma_z\sqrt{2/\pi} \right)$, then for $k'\in \setposi{2K}$ we can obtain
\begin{equation}
	\Exp\braces{\left| X_{k'} - \Exp\braces{X_{k'}}  \right|^3} \le \varepsilon \mathbb{V}\braces{X_{k'}}.
\end{equation}  
Let $\delta = 1$, the Lyapunov's condition for $Y_{n,k} = \sum_{k'=1}^{2K}X_{k'}$ can be expressed as
\begin{align}
0\le	& \ \frac{1}{\left( {\mathbb{V}\braces{Y_{n,k}}}\right)^{\frac{3}{2}} }\sum_{k'=1}^{2K}\Exp\braces{\left| X_{k'} - \Exp\braces{X_{k'}}  \right|^3}\nonumber\\
	\le& \  \frac{\varepsilon }{\left( {\mathbb{V}\braces{Y_{n,k}}}\right)^{\frac{3}{2}} }\sum_{k'=1}^{2K}\mathbb{V}\braces{X_{k'}} \equaa  \frac{\varepsilon}{\sqrt{\mathbb{V}\braces{Y_{n,k}}}},
\end{align}
where $\left(\textrm{a}\right)$ comes from \eqref{equ:variance of Y_{n,k}} and \eqref{equ:variance of X_{k'}}.
Meanwhile, from \eqref{equ:variance of Y_{n,k}} and \eqref{equ:condition on v_{n,k}} we can obtain
\begin{equation}\label{equ:limit of variance of Y_{n,k}}
	\lim\limits_{K\to\infty}\mathbb{V}\braces{Y_{n,k}} = \lim\limits_{K\to\infty}2\zeta K + \sigma_z^2 \to \infty.
\end{equation}
Thus,
\begin{equation}
	\lim\limits_{2K\to\infty}\frac{1}{\left( {\mathbb{V}\braces{Y_{n,k}}}\right)^{\frac{3}{2}} }\sum_{k'=1}^{2K}\Exp\braces{\left| X_{k'} - \Exp\braces{X_{k'}}  \right|^3} = 0.
\end{equation}
Hence, $\braces{X_{k'}}_{k'=1}^{2K}$ satisfies the Lyapunov's condition.
This completes the proof.

\bibliographystyle{IEEEtran}  
\bibliography{IEEEabrv,reference}

\end{document}